\let\oldparagraph\paragraph
\renewcommand{\paragraph}[1]{\oldparagraph{\bf{#1}}}
\newcommand{\mi}{_{-i}}
\newcommand{\mj}{_{-j}}
\newcommand{\st}{ \:|\: }
\newcommand{\su}{\subseteq}
\newcommand{\E}{\mathbf E}
\newcommand{\R}{\mathbb R}
\begin{document}

\title{Bounding the Inefficiency of Altruism Through Social Contribution Games}

\author{%
Mona Rahn\inst{1}\and
Guido Sch\"afer\inst{1,2}
}

\institute{
CWI and VU University Amsterdam, The Netherlands \\
\email{\{rahn,g.schaefer\}@cwi.nl}
}

\institute{
CWI Amsterdam, The Netherlands 
\and
VU University Amsterdam, The Netherlands \\
\email{\{rahn,g.schaefer\}@cwi.nl}
}

\maketitle

\begin{abstract}
We introduce a new class of games, called \emph{social contribution games (SCGs)}, where each player's individual cost is equal to the cost he induces on society because of his presence. Our results reveal that SCGs constitute useful abstractions of altruistic games when it comes to the analysis of the robust price of anarchy. We first show that SCGs are \emph{altruism-independently smooth}, i.e., the robust price of anarchy of these games remains the same under arbitrary altruistic extensions.
We then devise a general reduction technique that enables us to reduce the problem of establishing smoothness for an altruistic extension of a base game to a corresponding SCG. Our reduction applies whenever the base game relates to a canonical SCG by satisfying a simple \emph{social contribution boundedness} property. As it turns out, several well-known games satisfy this property and are thus amenable to our reduction technique. Examples include min-sum scheduling games, congestion games, second price auctions and valid utility games. \mbox{Using} our technique, we derive mostly tight bounds on the robust price of anarchy of their altruistic extensions. For the majority of the mentioned game classes, the results extend to the more differentiated friendship setting. As we show, our reduction technique covers this model if the base game satisfies three additional natural properties. 
\end{abstract}

\section{Introduction}

\paragraph{Motivation and Background.}

The study of the inefficiency of equilibria in strategic games has been one of main research streams in algorithmic game theory in the last decade and contributed to the explanation of several phenomena observed in real life. More recently, researchers have also started to incorporate more complex social relationships among the players in such studies, accounting for the fact that players cannot always be regarded as isolated entities that merely act on their own behalf (see also \cite{FS06}). In particular, the extent by which other-regarding preferences such as \emph{altruism} and \emph{spite} impact the inefficiency of equilibria has been studied intensively; see, e.g.,  \cite{BSS07,buehler,caragiannis,chenkempe,altruistic_games,networkdesign,skopalik,hoefer,Hoefer:2012,inefficiency}. 

In this context, some counterintuitive results have been shown that are still not well-understood. For example, in a series of papers \cite{buehler,caragiannis,altruistic_games} it was observed that for congestion games the inefficiency of equilibria gets worse as players become more altruistically, therefore suggesting that altruistic behavior can actually be harmful for society. On the other hand, valid utility games turn out to be unaffected by altruism as their inefficiency remains unaltered under altruistic behavior \cite{altruistic_games}. 
These discrepancies triggered our interest in the research conducted in this paper. The basic question that we are asking here is: What is it that impacts the inefficiency of equilibria of games with altruistic players?

To this aim, we consider two different models that have previously been studied in the literature: 
the \emph{altruism model} \cite{altruistic_games} and the \emph{friendship model} \cite{inefficiency}.  
In both models, one starts from a strategic game (called the \emph{base game}) specifying the \emph{direct cost} of each player and then extends this game by defining the \emph{perceived cost} of each player as a function of his neighbors' direct costs. 
In the {altruism model}, player $i$'s perceived cost is a convex combination of his direct cost and the overall social cost. In the more general {friendship model}, player $i$'s perceived cost is a linear combination of his direct cost and his friends' costs. 

In order to quantify the inefficiency of equilibria in our games we resort to the concept of the \emph{price of anarchy (PoA)} \cite{koutsoupias2}, which is defined as the worst-case relative gap between the cost of a Nash equilibrium and a social optimum (over all instances of the game). By now, a standard approach to prove upper bounds on the price of anarchy is through the use of the \emph{smoothness framework} introduced by Roughgarden \cite{roughgarden}. Basically, this framework allows us to derive bounds on the \emph{robust price of anarchy} by showing that the underlying game satisfies a certain \emph{$(\lambda, \mu)$-smoothness property} for some parameters $\lambda$ and $\mu$. The robust price of anarchy holds for various solution concepts, ranging from pure Nash equilibria to coarse correlated equilibria (see, e.g., Young~\cite{young}). 

The original smoothness framework \cite{roughgarden} has been extended to both the altruism and the friendship model in  \cite{altruistic_games} and \cite{inefficiency}, respectively. Applying these adapted smoothness frameworks to bound the robust price of anarchy is often technically involved because of the altruistic terms that need to be taken into account additionally (see also the analy\-ses in \cite{altruistic_games,inefficiency}). 

Instead, we take a different approach here. As we will show, there is a natural class of games, which we term \emph{social contribution games (SCGs)}, that is intimately connected with our altruism and friendship games. We establish a general reduction technique that enables us to reduce the problem of establishing smoothness for our altruism or friendship game to the problem of proving smoothness for a corresponding SCG. The latter is usually much simpler than proving smoothness for the altruism or friendship game directly. This also opens up the possibility to derive better bounds on the robust price of anarchy of these games through the usage of our new reduction technique.

\paragraph{Our Contributions.}

The main contributions presented in this paper are as follows:

\begin{itemize}
\item We introduce a new class of games, which we term \emph{social contribution games (SCGs)}, where each player's individual cost is defined as the cost he incurs on society because of his presence. Said differently, player $i$'s cost is equal to the difference in social cost if player $i$ is present/absent in the game. 
We show that SCGs are \emph{altruism-independently smooth}, i.e., if the SCG is $(\lambda, \mu)$-smooth then every altruistic extension is $(\lambda, \mu)$-smooth as well. 

\item We derive a general reduction technique to bound the robust price of anarchy of  both altruism and friendship games. Basically, the reduction can be applied whenever the underlying base game is \emph{social contribution bounded}, meaning that the direct cost of each player is bounded by his respective cost in the corresponding SCG (for the friendship model a slightly stronger condition needs to hold). It is worth mentioning that this reduction preserves the $(\lambda, \mu)$-smoothness parameters, i.e., the altruism or friendship game inherits the $(\lambda, \mu)$-smoothness parameters of the SCG. 

\item We generalize smoothness for friendship extensions to  \emph{weight-bounded} social cost functions. In previous papers, the used techniques usually required sum-bounded\-ness, which is a stronger condition \cite{inefficiency}. Applying this definition to scheduling games with weighted sum as social cost, we derive a nice characterization of those scheduling games whose robust PoA does not grow for friendship extensions.

\item We show that social contribution boundedness is satisfied by several well-known games, like min-sum scheduling games, congestion games, second-price auctions and valid utility games. Using our reduction technique, we then derive upper bounds on the robust price of anarchy of their friendship/altruism extensions. In most cases we prove matching lower bounds. The results are summarized in Table~\ref{tab:results}. 
\end{itemize}

\begin{table}[t]
\caption{Robust price of anarchy bounds derived in this paper for the friendship model (unless specified otherwise).  $\quad$ ($*$) \textit{ Our result only holds if a certain weight condition is satisfied. The previous best result was shown only for the special case $R || \sum_j C_j$.}}
\label{tab:results}
\begin{tabular}{|@{\;}l@{\;}|@{\;}c@{\;\;}r@{\;}|@{\;}p{4.6cm}@{\;}|}
\hline 
 & \multicolumn{2}{c@{\;}|@{\;}}{\textbf{Robust PoA}} & \textbf{Remarks} \\
 \textbf{Games} & our results & previous best &  \\
 \hline
 \hline
 $R || \sum_j w_j C_j \qquad \quad(*)$ & $= \, 4$ & $\leq 23.31$ \cite{inefficiency} \ & RPoA $= 4$ (selfish players) \cite{correa12} \\
 $P || \sum_j C_j$ & $\leq \, 2$ &  & RPoA $= \frac32 - \frac{1}{2m}$ (selfish players) \\
 linear congestion games & $= \frac{17}{3}$ & $\leq 7$ \cite{inefficiency} \ & $5 \leq$ PoA $\leq \frac{17}{3}$ (in special case) \cite{bilo} \\
 second price auctions & $= \,2$ &   &\\
 valid utility games & $= \,2$ & $= 2$ \cite{altruistic_games} \ & altruism model; RPoA $= 2$ (selfish players) \cite{roughgarden}  \\
\hline
\end{tabular}
\end{table}

\paragraph{Related Work.}~

Several articles propose models of altruism and spite \cite{bilo,BSS07,buehler,caragiannis,chenkempe,altruistic_games,networkdesign,skopalik,hoefer,Hoefer:2012}. 
Among these articles, the inefficiency of equilibria in the presence of altruism and spite was studied for various games in \cite{bilo,buehler,caragiannis,chenkempe,altruistic_games,networkdesign}. After its introduction in \cite{roughgarden}, the smoothness framework has been extended to incomplete information settings \cite{Roughgarden:2012,syrgkanis12a,syrgkanis12b} and altruism/spite settings \cite{altruistic_games,inefficiency}. 

The robust price of anarchy for minsum scheduling (not taking altruism or friendship into account) was studied in various papers such as \cite{hoeksmauetz}. They show that it does not exceed $2$ for $Q||\sum_j C_j$ (which we improve to a tight value of $3/2$ in the special case $P||\sum_jC_j$). A value of 4 for $R || \sum_j w_j C_j$ has been proven in \cite{cole11}.

Our work on linear congestion games generalizes a result in \cite{bilo}. They show that the pure price of anarchy does not exceed $17/3$ in a restricted friendship setting ($\alpha_{ij} \in \{0,1\}$). 

As indicated above, most related to our work are the articles \cite{altruistic_games,inefficiency}. We significantly improve the bounds on the robust price of anarchy for congestion games and unrelated machine scheduling games in \cite{inefficiency} and at the same time simplify the analysis by using our reduction technique.

\sloppy

\section{Preliminaries}
Let $G =  (N, \{\Sigma_i\}_{i \in N}, \{C_i\}_{i \in N})$ be a \emph{cost-minimization game}, where $N$ is the set of players, $\Sigma_i$ is player $i$'s strategy space, $\Sigma = \prod_{i \in N} \Sigma_i$ is the set of strategy profiles, and $C_i: \Sigma \to \mathbb R$ denotes the cost player $i$ must pay for a given strategy profile. We assume that each player seeks to minimize his cost. A \emph{social cost function} $C: \Sigma \to \R$ assigns a social cost to each strategy profile. We usually require $C$ to be \emph{sum-bounded}, i.e., $C(s) \leq \sum_{i \in N} C_i(s)$ for all $s \in \Sigma$.

We denote \emph{payoff-maximization games} as $G = (N, \{\Sigma_i\}_{i \in N}, \{\Pi_i\}_{i \in N})$ with \emph{social welfare} $\Pi: \Sigma \to \R$. In this case, each player $i$ tries to maximize his \emph{utility (or payoff)} $\Pi_i$. Again, we usually assume that $\Pi$ is \emph{sum-bounded}, i.e. $\Pi(s) \geq \sum_{i \in N} \Pi_i(s)$ for all $s \in \Sigma.$

In the following, we state most of the definitions and theorems only for cost-minimization games. The payoff-maximization case works  similarly by reversing all inequalities.

\begin{definition}
A \emph{coarse equilibrium} of a cost-minimization game $G$ is a probability distribution $\sigma$ over $\Sigma$ such that the following holds:
If $s$ is a random variable with distribution $\sigma$, then for all players $i$ and all strategies $s_i^* \in \Sigma_i$, $\E_{s \sim \sigma} [C_i(s)] \leq \E_{s\mi \sim \sigma\mi} [C_i(s_i^*, s\mi)]$,
where $\sigma\mi$ is the projection of $\sigma$ on $\Sigma\mi = \Sigma_1 \times \ldots \times \Sigma_{i-1} \times \Sigma_{i+1} \times \ldots \times \Sigma_n.$ A \emph{mixed Nash equilibrium} is a coarse equilibrium $\sigma$ that is the product of independent probability distributions $\sigma_i$ on $\Sigma_i$ (for $i \in N$). A \emph{(pure) Nash equilibrium} is a strategy profile $s \in \Sigma$ such that for all $s^* \in \Sigma$, $C_i(s) \leq C_i(s_i^*, s\mi)$, where $s\mi$ denotes the restriction of $s$ to $\Sigma\mi$.

The \emph{coarse (resp. correlated, mixed, pure) price of anarchy (PoA)} of a cost-mini\-mization game $G$ is defined as $\sup_s C(s)/C(s^*)$, where $s^*$ minimizes $C$ and $s$ runs over the coarse (resp. correlated, mixed, pure) Nash equilibria of $G$.\footnote{Similarly, we define the respective types of PoA for a payoff-maximization game as $\sup \Pi(s^*) / \Pi(s)$, where $s$ and $s^*$ are as above.} The coarse (resp. correlated, mixed, pure) PoA of a class $\mathcal G$ of games is defined as the supremum of the respective PoA values of games in $\mathcal G$.
\end{definition}

Note that pure Nash equilibria constitute a subset of mixed Nash equilibria which constitute a subset of coarse equilibria.
This implies that that the respective prices of anarchy are non-decreasing (in this order).

\subsection{The Altruism Model}
\begin{definition}[\cite{altruistic_games}]
Let $\alpha \in [0,1]^N$ and $G$ be a cost-minimization game. The \emph{$\alpha$-altruistic extension} of $G$ is defined as the cost-minimization game $G^\alpha = (N, \{\Sigma_i\}_{i \in N}, \{C^\alpha_i\}_{i \in N})$, where for any $i \in N$, the \emph{perceived cost} is the convex combination
$
C_i^\alpha = (1-\alpha_i)C_i + \alpha_i C
$
(interpreting $\R^\Sigma$ as a real vector space).
$G$ is called the \emph{base game}. The social cost function of $G^\alpha$ is again $C$, i.e., the cost of the base game.
\end{definition}

The higher the `altruism level' $\alpha_i$ is, the more $i$ cares about the society in general: player $i$ behaves egoistically if $\alpha_i = 0$, whereas he is completely altruistic if $\alpha_i = 1$.

\begin{definition}
Let $G$ be a cost-minimization game with sumbounded social cost and consider an altruistic extension $G^\alpha$ of $G$. $G^\alpha$ is $(\lambda, \mu)$\emph{-smooth} if there exists an optimal strategy $s^*$ such that for any strategy $s \in \Sigma$,
\[
\sum_{i \in N} \big (C_i(s_i^*, s\mi) + \alpha_i(C\mi (s_i^*, s\mi) - C\mi(s)) \big ) \leq \lambda C(s^*) + \mu C(s),
\]
where we abbreviate $C\mi := C - C_i$. 

The \emph{robust PoA} of $G^\alpha$ is defined as $\inf\{ \frac{\lambda}{1-\mu} \st \text{$G^\alpha$ is $(\lambda, \mu)$-smooth}, \,\mu<1\}$.
\end{definition}

\begin{theorem}[{\cite{altruistic_games}}]
Let $G^\alpha$ be an $\alpha$-altruistic extension of $G$. Then the coarse (and thus the correlated, mixed and pure) PoA of $G^\alpha$ is bounded from above by the robust PoA of $G^\alpha.$
\end{theorem}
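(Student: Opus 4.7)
The plan is to carry out the standard Roughgarden-style smoothness argument, adapted to the altruistic setting. Fix a coarse equilibrium $\sigma$ of $G^\alpha$ and an optimum $s^*$ of $C$ witnessing $(\lambda,\mu)$-smoothness. The goal is to show $\E_{s\sim\sigma}[C(s)]\le\frac{\lambda}{1-\mu}C(s^*)$, and then take the infimum over admissible pairs $(\lambda,\mu)$.

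First I would rewrite the perceived cost in the form $C_i^\alpha=(1-\alpha_i)C_i+\alpha_i C=C_i+\alpha_i C\mi$, since $C=C_i+C\mi$. This is the crucial algebraic identity because it makes the altruistic correction term on the right-hand side of the smoothness definition appear naturally when we apply the coarse-equilibrium deviation inequality for the perceived costs. Concretely, for every player $i$ the coarse-equilibrium condition for $G^\alpha$ with deviation $s_i^*$ gives
\[
\E_{s\sim\sigma}[C_i(s)+\alpha_i C\mi(s)]\;\le\;\E_{s\mi\sim\sigma\mi}[C_i(s_i^*,s\mi)+\alpha_i C\mi(s_i^*,s\mi)].
\]

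Next I would sum these inequalities over $i\in N$ and rearrange so that the $\alpha_i C\mi(s)$ terms cross over to the right-hand side, yielding
\[
\sum_{i\in N}\E[C_i(s)]\;\le\;\E\Bigl[\sum_{i\in N}\bigl(C_i(s_i^*,s\mi)+\alpha_i(C\mi(s_i^*,s\mi)-C\mi(s))\bigr)\Bigr].
\]
Sum-boundedness of $C$ bounds the left-hand side below by $\E[C(s)]$, and the $(\lambda,\mu)$-smoothness inequality, applied pointwise to each realization of $s$ and then averaged under $\sigma$, bounds the right-hand side above by $\lambda C(s^*)+\mu\E[C(s)]$. Combining the two estimates gives $\E[C(s)]\le\lambda C(s^*)+\mu\E[C(s)]$; since $\mu<1$, rearranging yields $\E[C(s)]/C(s^*)\le\lambda/(1-\mu)$, and taking the infimum over all valid $(\lambda,\mu)$ delivers the claim. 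The statement for correlated, mixed, and pure equilibria follows because each of these classes is contained in the class of coarse equilibria.

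There is no real obstacle in this argument; the only point requiring care is the bookkeeping in the rewrite $C_i^\alpha=C_i+\alpha_i C\mi$ and the subsequent rearrangement, which must produce exactly the expression appearing inside the smoothness definition so that the smoothness hypothesis can be invoked verbatim.
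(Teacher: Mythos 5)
Your proof is correct and follows exactly the standard smoothness argument that the paper relies on: the rewrite $C_i^\alpha = C_i + \alpha_i C\mi$, the summed coarse-equilibrium deviation inequalities, sum-boundedness on the left, and pointwise application of the $(\lambda,\mu)$-smoothness inequality under the expectation. This is the same route as the paper's own (cited) proof, mirrored in its Appendix~\ref{weighted_smoothness_appendix} argument for the friendship model, so there is nothing to add.
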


\subsection{The Friendship Model}
\begin{definition}[\cite{inefficiency}]
Let $G = (N,\{\Sigma_i\}_{i \in N}, \{C_i\}_{i \in N})$ be a cost-minimization game with social cost $C$ and $\alpha \in [0,1]^{N \times N}$ such that $\alpha_{ii} = 1$ for all $i \in N$. The $\alpha$-\emph{friendship extension} of $G$ is defined as $G^\alpha = (N,\{\Sigma_i\}_{i \in N}, \{C^\alpha_i\}_{i \in N})$, where for any $i \in N$, the \emph{perceived cost} is defined as
$
C^\alpha_i = \sum_j \alpha_{ij} C_j.
$
Like in the altruism model, we consider $C$, the social cost function of the base game, as the social cost for $G^\alpha$.
\end{definition}

For players $i$ and $j$, $\alpha_{ij}$ can be interpreted as the level of affection $i$ feels towards $j$. Note that if $C = \sum_j C_j$, then the altruism model is a special case of the friendship model because in this case, $C_i^\alpha = C_i + \sum_{j \neq i} \alpha_i C_j.$

Next we adapt the smoothness definition in \cite{inefficiency} for the friendship model to the weighted player case; we will later need this to bound the robust PoA for weighted completion time scheduling games.

\begin{definition} 
Let $G^\alpha$ be friendship extension of a cost-minimization game with a \emph{weight-bounded} social cost function, i.e., $C \leq \sum_i w_i C_i$ for some $w \in \mathbb R^N_+$. $G^\alpha$ is $(\lambda, \mu)$-\emph{smooth} if there exists a (possibly randomized) strategy profile $\bar s$ such that for all strategy profiles $s$ and all optima $s^*$, 
\[
\sum_i w_i \big(C_i(\bar s_i, s\mi) + \sum_{j \neq i} \alpha_{ij} (C_j(\bar s_i, s\mi)-C_j(s)) \big) 
\leq  \lambda C(s^*) + \mu C(s).
\]
We define the \emph{robust PoA} of $G^\alpha$ as $\inf\{ \frac{\lambda}{1-\mu} \st \text{$G^\alpha$ is $(\lambda, \mu)$-smooth}, \, \mu<1\}$
\end{definition}

\begin{theorem}\label{weighted_smoothness_thm}
Let $G^\alpha$ be a friendship extension of a cost-minimization game with weight-bounded social cost function $C$. If $G^\alpha$ is $(\lambda, \mu)$-smooth with $\mu < 1$, then the coarse PoA of $G^\alpha$ is at most $\frac{\lambda}{1-\mu}$.
\end{theorem}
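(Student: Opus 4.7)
The plan is to follow the standard smoothness-to-PoA reduction of Roughgarden, adapted to the friendship perceived cost and to the weight-bounded setting. Fix a coarse equilibrium $\sigma$ of $G^\alpha$, an optimum $s^*$, and let $\bar s$ be the (possibly randomized) profile provided by the smoothness hypothesis. All expectations below are taken over $s \sim \sigma$ and $\bar s$ jointly, with the randomness of $\bar s$ drawn independently of that of $\sigma$.

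First, for each player $i$ I would invoke the coarse-equilibrium inequality in $G^\alpha$ against the deviation given by the marginal of $\bar s$ on the $i$-th coordinate (a valid mixed deviation, since player $i$ can only commit to his own marginal). Since $\alpha_{ii}=1$, the perceived cost of player $i$ equals $C_i + \sum_{j\neq i}\alpha_{ij}C_j$, and the equilibrium condition yields
\[
\E[C_i(s)] + \sum_{j\neq i} \alpha_{ij}\E[C_j(s)]
\leq \E[C_i(\bar s_i, s\mi)] + \sum_{j\neq i} \alpha_{ij}\E[C_j(\bar s_i, s\mi)].
\]
Cancelling $\sum_{j\neq i}\alpha_{ij}\E[C_j(s)]$ from both sides isolates $\E[C_i(s)]$ on the left and produces, on the right, precisely the per-player summand $C_i(\bar s_i, s\mi) + \sum_{j\neq i}\alpha_{ij}\bigl(C_j(\bar s_i, s\mi)-C_j(s)\bigr)$ appearing in the smoothness inequality.

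Next I would multiply by $w_i$, sum over $i$, and apply the $(\lambda,\mu)$-smoothness inequality pointwise in $(s,\bar s)$ before taking expectation. Weight-boundedness gives $\E[C(s)] \leq \sum_i w_i \E[C_i(s)]$ on the left, and the smoothness bound contributes $\lambda C(s^*) + \mu \E[C(s)]$ on the right, so
\[
\E[C(s)] \leq \lambda C(s^*) + \mu\,\E[C(s)].
\]
Since $\mu<1$, rearranging yields $\E[C(s)]/C(s^*) \leq \lambda/(1-\mu)$; taking the supremum over coarse equilibria of $G^\alpha$ completes the argument.

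The calculation is essentially bookkeeping. The only subtle point I would state explicitly is that each player $i$'s unilateral deviation is taken to be the \emph{marginal} of $\bar s$ on coordinate $i$ rather than the joint profile $\bar s$ itself (a single player cannot commit to the joint randomness of all coordinates), and that this projection is harmless because only $\bar s_i$ appears in the $i$-th summand of the smoothness condition. Once this is noted, the rest is a direct adaptation of the unweighted altruism-model smoothness argument.
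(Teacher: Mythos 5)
Your proposal is correct and follows essentially the same route as the paper's proof: invoke the coarse-equilibrium condition per player against the deviation $\bar s_i$, rearrange to obtain the per-player smoothness summand, weight and sum, use weight-boundedness on the left and $(\lambda,\mu)$-smoothness on the right, and solve for $\E[C(s)]$. The only cosmetic difference is that the paper assumes $\bar s$ pure without loss of generality and remarks that the mixed case is analogous, whereas you handle the randomized $\bar s$ explicitly via marginals; both treatments are fine.
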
 
The proof can be found in Appendix \ref{weighted_smoothness_appendix}.
\newline

One can also generalize the smoothness definition of the altruism model to weighted social costs and by allowing arbitrary $\bar s$ instead of the optimal $s^*$ in the term that is to be bounded. However, we do not need such generality in this paper and thus leave it out for simplicity.

\section{Social Contribution Games}
\begin{definition}
Let $G = (N, \{\Sigma_i\}_{i \in N}, \{C_i\}_{i \in N})$ be a cost-minimization game with social cost $C: \Sigma \to \mathbb R$. We call $G$ a \emph{(cost-minimization) social contribution game (SCG)} if for all players $i$ there exists a default strategy $\emptyset_i$ such that for all $s \in \Sigma,$
\[
C_i(s) = C(s) - C(\emptyset_i, s\mi).
\]
\end{definition}

The strategy $\emptyset_i$ is often interpreted as `refusing to participate in the game'. In that sense, $i$ pays exactly the social cost he causes by choosing to play; in the payoff-maximization case, he gets exactly what he contributes to the social welfare. So social contribution games are `fair' in some sense.

Basic utility games \cite{vetta} satisfy the definition of an SCG (see also Section \ref{section_valid_utility}). In particular, the competitive facility location game (which is a basic utility game by \cite{vetta}) is an SCG.

We now show that social contribution games satisfy a nice invariance property with respect to their $\alpha$-altruistic extensions.

\begin{lemma}
Any social contribution game is \emph{altruism-independently smooth}, i.e., for all $\alpha = (\alpha_i)_{i \in N}$ and corresponding altruistic extensions $G^\alpha$ of $G$, the robust price of anarchy in $G$ and $G^\alpha$ is the same.
\end{lemma}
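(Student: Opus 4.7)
The plan is to show that in any SCG, the altruistic correction term appearing in the smoothness inequality vanishes identically, so that $G^\alpha$ and the base game $G$ satisfy exactly the same set of $(\lambda, \mu)$-smoothness conditions. Equality of their robust prices of anarchy then follows by taking the infimum of $\lambda/(1-\mu)$.

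The key observation is that in an SCG, $C\mi(s)$ does not depend on $s_i$. Indeed, from the defining identity $C_i(s) = C(s) - C(\emptyset_i, s\mi)$ one gets
\[
C\mi(s) \;=\; C(s) - C_i(s) \;=\; C(\emptyset_i, s\mi),
\]
which manifestly depends only on $s\mi$. Hence for any pair of profiles $s, s^* \in \Sigma$ and any player $i$,
\[
C\mi(s_i^*, s\mi) - C\mi(s) \;=\; C(\emptyset_i, s\mi) - C(\emptyset_i, s\mi) \;=\; 0.
\]

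Substituting into the $(\lambda, \mu)$-smoothness inequality for $G^\alpha$, every $\alpha_i$-weighted summand on the left-hand side disappears, and the inequality collapses to
\[
\sum_{i \in N} C_i(s_i^*, s\mi) \;\leq\; \lambda\, C(s^*) + \mu\, C(s),
\]
which is exactly the smoothness condition for the base game $G$ (the case $\alpha = 0$). Thus $G^\alpha$ is $(\lambda, \mu)$-smooth via witness $s^*$ if and only if $G$ is $(\lambda, \mu)$-smooth via the same $s^*$, so the two games admit the same $(\lambda, \mu)$-pairs and their robust PoA values coincide. There is no real technical obstacle here: the entire argument hinges on recognising that the SCG definition forces the externality $C\mi$ that player $i$ imposes on the rest of society to be a function of $s\mi$ alone, which is precisely the quantity the altruism term is built to penalise.
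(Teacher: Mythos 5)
Your proof is correct and follows essentially the same route as the paper: both arguments rest on the observation that in an SCG the externality $C\mi(s) = C(\emptyset_i, s\mi)$ is independent of $s_i$, so the altruistic terms in the smoothness inequality cancel and $G^\alpha$ is $(\lambda,\mu)$-smooth if and only if $G$ is. Nothing is missing.
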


\begin{proof}
For all players $i$, $C \mi (s) = C(s) - C_i(s)$ is independent of $s_i$ since $C(s) - C_i(s) = C(\emptyset_i, s\mi)$. Thus for all strategy profiles $s, s^*$, and all $\alpha \in \mathbb R^N$,
\[
\sum_i \big( C_i (s_i^*, s \mi) + \alpha_i ( C \mi(s_i^*, s\mi) - C \mi (s) )\big) = \sum_i C_i (s_i^*, s \mi).
\]
It follows that for all $(\lambda, \mu) \in \mathbb R^2$, $G^\alpha$ is $(\lambda, \mu)$-smooth iff $G$ is.
\qed
\end{proof}

The notions of $\alpha$-altruistic extensions and $\alpha$-independent smoothness can be easily extended to $\alpha \in \mathbb R^N$. The above lemma continues to hold in this case. So even if a player wants to \emph{hurt} society, the robust PoA stays the same.

\subsection{Social Contribution Bounded Games}
\begin{definition}
Let $G = (N, \{\Sigma_i\}_{i \in N}, \{C_i\}_{i \in N})$ a cost-minimization game with sumbounded social cost $C: \Sigma \to \mathbb R$. We call $G$ \emph{social contribution bounded (SC-bounded)} if for all players $i$ there exists a default strategy $\emptyset_i$ such that for all $s \in \Sigma,$
\[
C_i(s) \leq C(s) - C(\emptyset_i, s\mi).
\]

In this case, we define \emph{the corresponding social contribution game} $\bar G = (N, \{\Sigma_i\}_{i \in N}, \{\bar C_i\}_{i \in N})$ by setting 
$
\bar C_i (s) = C(s) - C(\emptyset_i, s\mi).
$
\end{definition}

Again, we think of $\emptyset_i$ as the option that $i$ does not participate. Note that $\emptyset_i$ need not actually be an element of $\Sigma_i$. In many games such as scheduling or congestion games, it is not an option not to participate (i.e., not to use any resources). So formally, we should require: There exists a function $\mathfrak C: \prod_{i \in N} (\Sigma_i \cup \{\emptyset_i\}) \to \mathbb R$ such that $\mathfrak C |_{\Sigma} = C$ and $C_i(s) \leq \mathfrak C(s) - \mathfrak C(\emptyset_i, s\mi)$ for all $i$ and $s$. However, there is a natural way to extend $C$ (and $C_i$) on $\prod_{i \in N} (\Sigma_i \cup \{\emptyset_i\})$, as we will see later. So for simplicity of notation, we write $C$ instead of $\mathfrak C$.

The following theorem shows that if we want to get a bound on the PoA of $\alpha$-altruistic extensions of an SC-bounded game, we might as well consider the corresponding SCG \emph{regardless of $\alpha$}.

\begin{theorem}\label{corresp_altruism}
Let $G$ be social contribution bounded and suppose that the robust price of anarchy of the corresponding SCG $\bar G$ is $\xi$. Then for all altruistic extensions $G^\alpha$ of $G$, the robust price of anarchy is at most $\xi$.
\end{theorem}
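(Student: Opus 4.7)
The plan is to show that $(\lambda,\mu)$-smoothness of the SCG $\bar G$ transfers directly to $G^\alpha$ for every altruism vector $\alpha$. Combined with the previous altruism-independent smoothness lemma (which says $\bar G$ is $(\lambda,\mu)$-smooth iff $\bar G^\alpha$ is), it suffices to compare the smoothness inequalities of $G^\alpha$ and $\bar G^\alpha$ term-by-term and conclude that the former's left-hand side is dominated by the latter's. Since $G$ and $\bar G$ share the same social cost $C$, they share the same optimum $s^*$, so no mismatch of optima arises.

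First I would simplify the $\bar G^\alpha$ smoothness LHS. The key observation is that $\bar C\mi(s) = C(s) - \bar C_i(s) = C(\emptyset_i,s\mi)$, which depends only on $s\mi$; hence for every deviation $s_i^*$ the altruistic term $\alpha_i(\bar C\mi(s_i^*,s\mi) - \bar C\mi(s))$ vanishes, and the LHS collapses to $\sum_i \bar C_i(s_i^*, s\mi)$. Next I would rewrite a single summand of $G^\alpha$'s smoothness LHS by expanding $C\mi = C - C_i$ and collecting terms into the perceived-cost form, obtaining
\[
C_i(s_i^*, s\mi) + \alpha_i\bigl(C\mi(s_i^*,s\mi) - C\mi(s)\bigr) = C_i^\alpha(s_i^*, s\mi) - \alpha_i C\mi(s),
\]
and the same identity with $C_i$ replaced by $\bar C_i$ throughout; in the latter case, because of the collapse above, the right-hand side equals $\bar C_i(s_i^*, s\mi)$.

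Now the SC-boundedness hypothesis $C_i \leq \bar C_i$ can be applied with both signs: it gives $C_i^\alpha = (1-\alpha_i) C_i + \alpha_i C \leq (1-\alpha_i) \bar C_i + \alpha_i C = \bar C_i^\alpha$ (using $1-\alpha_i \geq 0$), while simultaneously $C\mi = C - C_i \geq C - \bar C_i = \bar C\mi$ (using $\alpha_i \geq 0$ when we multiply by $-\alpha_i$). Combining the two inequalities, the $i$-th summand for $G^\alpha$ is bounded by the $i$-th summand for $\bar G^\alpha$, i.e.\ by $\bar C_i(s_i^*,s\mi)$. Summing over $i$ and invoking the $(\lambda,\mu)$-smoothness of $\bar G^\alpha$ yields the smoothness inequality for $G^\alpha$ with the same $\lambda,\mu$. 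Taking the infimum over admissible $(\lambda,\mu)$ gives robust PoA $\leq \xi$.

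The step I expect to require the most care is the simultaneous use of $C_i \leq \bar C_i$ in two opposite directions (for the direct-cost term versus for the altruistic $C\mi$ term); this is precisely where the nonnegativity of $\alpha_i$ and of $1-\alpha_i$ must be used to keep the right inequality direction in the rewritten perceived-cost form. Beyond that, one should also verify the mild technical point that the implicit sum-boundedness of $C$ needed for the smoothness framework on $\bar G^\alpha$ is inherited from the hypothesis that $\bar G$ itself has a robust PoA, so the application of the altruism-independent smoothness lemma to $\bar G$ is legitimate.
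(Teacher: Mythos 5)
Your proof is correct and follows essentially the same route as the paper: rewriting each summand in perceived-cost form and then applying SC-boundedness twice (once scaled by $1-\alpha_i$ for the direct-cost term at $(s_i^*,s\mi)$, once scaled by $\alpha_i$ for the $C\mi(s)$ term) is exactly the paper's computation, just packaged via $\bar C_i^\alpha$ and the vanishing of the altruistic term for the SCG. The technical point you flag (sum-boundedness of $C$ with respect to $\bar C_i$) indeed follows from $C \leq \sum_i C_i \leq \sum_i \bar C_i$, so nothing is missing.
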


\begin{proof}
Let $s, s^*\in \Sigma$. We know that for $\alpha_i \in [0,1]^N$,
\begin{align*}
& C_i(s_i^*, s\mi) + \alpha_i (C \mi (s_i^*, s\mi) - C \mi (s)) \\
& \qquad = 
(1- \alpha_i) C_i(s_i^*, s\mi) + \alpha_i (C (s_i^*, s\mi) - C \mi (s)) \\
& \qquad \leq (1-\alpha_i) (C(s_i^*, s \mi) - C(\emptyset_i, s\mi)) 
+ \alpha_i (C (s_i^*, s\mi) - C(\emptyset_i, s\mi)) \\
& \qquad = \bar C_i (s_i^*, s\mi),
\end{align*}
where the inequality follows from applying SC-boundedness twice.
Summing over all players $i$, it follows that $G^\alpha$ is $(\lambda, \mu)$-smooth if $\bar G$ is.
\qed
\end{proof}

Now, in order to be able to make statements about \emph{friendship} extensions, we need a slightly stronger definition.

\begin{definition}
Assume a cost minimization game $G$ with weight-bounded social cost satisfies three assumptions for all $s \in \Sigma$ and players $i$: 
\begin{enumerate}
  \setlength{\parskip}{0pt}
\item $\: C_i(\emptyset_i, s\mi) = 0$ (if $i$ does not participate, he pays nothing)
\item $\forall j \neq i: \:C_j(\emptyset_i, s\mi) \leq C_j(s)$ (other players' costs can only increase if $i$ participates)
\item $w_i \sum_j (C_j (s) - C_j(\emptyset_i, s\mi)) \leq C(s) - C(\emptyset_i, s\mi)$ (the weighted impact of $i$'s participation on the players' costs is bounded by his impact on the social cost)
\end{enumerate} 
Then we call $G$ \emph{strongly SC-bounded}. 
\end{definition}

If all weights are 1, then assumption (3) easily follows from \newline
\emph{3b. $C(s) = \sum_j C_j(s)$ (social cost is sum of individual costs).}\newline

Using this definition, we are able to derive bounds on friendship extensions:

\begin{theorem}\label{corresp_relationship}
Let $G$ be strongly SC-bounded. Suppose the robust price of anarchy of $\bar G$ is $\xi$. Then for all friendship extensions $G^\alpha$, the robust price of anarchy is at most $\xi$.
\end{theorem}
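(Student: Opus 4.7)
The plan is to mirror the proof of Theorem~\ref{corresp_altruism}: I will show that any $(\lambda,\mu)$-smoothness witness for $\bar G$ lifts to the same smoothness for $G^\alpha$, with the SCG optimum $s^*$ playing the role of the deviation profile $\bar s$ in the weighted friendship smoothness definition. Concretely, fix $(\lambda,\mu)$ with $\mu<1$ witnessing the robust PoA of $\bar G$, and fix an arbitrary $s\in\Sigma$ together with its optimum $s^*$. I will establish the two inequalities
\[
\sum_i w_i \Big(C_i(s_i^*, s\mi) + \sum_{j\neq i}\alpha_{ij}\bigl(C_j(s_i^*, s\mi) - C_j(s)\bigr)\Big) \;\leq\; \sum_i \bar C_i(s_i^*, s\mi) \;\leq\; \lambda C(s^*) + \mu C(s),
\]
where the right one is the given SCG smoothness. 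Since this bound is valid for every witness, taking $\lambda/(1-\mu) \to \xi$ yields the claimed PoA bound.

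The core step is the left inequality, which I will treat per player $i$. For the $j=i$ summand I use assumption~(1) to rewrite $C_i(s_i^*, s\mi) = C_i(s_i^*, s\mi) - C_i(\emptyset_i, s\mi)$. For each $j\neq i$ I will prove the pointwise bound
\[
\alpha_{ij}\bigl(C_j(s_i^*, s\mi) - C_j(s)\bigr) \;\leq\; C_j(s_i^*, s\mi) - C_j(\emptyset_i, s\mi),
\]
which rearranges to $C_j(\emptyset_i, s\mi) \leq (1-\alpha_{ij}) C_j(s_i^*, s\mi) + \alpha_{ij} C_j(s)$; this is immediate from $\alpha_{ij}\in[0,1]$ combined with assumption~(2) applied to the two profiles $(s_i^*, s\mi)$ and $s$. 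Summing these $n$ inequalities for a fixed $i$ and multiplying by $w_i$ yields $w_i \sum_j \bigl(C_j(s_i^*, s\mi) - C_j(\emptyset_i, s\mi)\bigr)$, which assumption~(3) bounds by $C(s_i^*, s\mi) - C(\emptyset_i, s\mi) = \bar C_i(s_i^*, s\mi)$. A final summation over $i$ gives the left inequality.

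The principal obstacle, compared with the altruism case, is that the coefficients $\alpha_{ij}$ are now independent numbers in $[0,1]$, so the crisp identity-style argument used for Theorem~\ref{corresp_altruism} is no longer available; this is exactly what forces the three strengthened assumptions. Assumptions~(1) and~(2) provide precisely what is needed so that the convex combination of $C_j(s_i^*, s\mi)$ and $C_j(s)$ dominates $C_j(\emptyset_i, s\mi)$, while assumption~(3) is what absorbs both the sum over $j$ and the weight $w_i$ (which is absent from $\bar G$) into a single SCG cost $\bar C_i$, so that the weighted friendship LHS ends up dominated term-by-term by the unweighted SCG LHS.
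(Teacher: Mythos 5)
Your proof is correct and follows essentially the same route as the paper: for each player $i$, assumptions (1) and (2) are used to dominate the friendship terms by $C_j(s_i^*,s_{-i})-C_j(\emptyset_i,s_{-i})$, assumption (3) then absorbs the weight $w_i$ into $\bar C_i(s_i^*,s_{-i})$, and summing over $i$ together with the smoothness of $\bar G$ and Theorem~\ref{weighted_smoothness_thm} yields the bound. The only cosmetic differences are that you merge the paper's two applications of assumption (2) into a single convex-combination inequality and take the deviation profile to be the optimum $s^*$ rather than the more general (possibly randomized) $\bar s$ the paper allows, which does not affect the validity of the argument for the stated theorem.
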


\begin{proof}
Consider the friendship extension $G^\alpha$ of $G$, where $C_i^\alpha = \sum_j \alpha_{ij} C_j$, $\alpha_{ij} \in [0,1]$, $\alpha_{ii} = 1$. We calculate that for all $i$:
\begin{align*}
& w_i\big(C_i(\bar s_i, s\mi) + \sum_{j\neq i} \alpha_{ij} (C_j(\bar s_i, s\mi) - C_j(s))\big)  \\
&\qquad \overset{(2)}{\leq} w_i\big(C_i(\bar s_i, s\mi) + \sum_{j\neq i} \alpha_{ij} (C_j(\bar s_i, s\mi) - C_j(\emptyset_i, s\mi))\big) \\
&\qquad \overset{(2)}{\leq}  w_i\big(C_i(\bar s_i, s\mi) + \sum_{j\neq i} (C_j(\bar s_i, s\mi) - C_j(\emptyset_i, s\mi))\big) \\
&\qquad \overset{(1)}{=}  w_i \sum_j (C_j(\bar s_i, s\mi) - C_j(\emptyset_i, s\mi))\\ 
&\qquad\overset{(3)}{\leq}  C(\bar s_i, s\mi) - C(\emptyset_i, s\mi) 
= \bar C_i (\bar s_i, s\mi).
\end{align*}
Summing over all $i$, it follows that if $\bar G$ is $(\lambda,\mu)$-smooth\footnote{in the sense that there exist $\bar s \in \Sigma$ and an optimal $s^* \in \Sigma$ such that for all $s \in \Sigma $ it holds that $\sum_i C_i(\bar s_i, s\mi) \leq \lambda C(s) + \mu C(s^*)$, generalizing Roughgarden's definition of smoothness \cite{roughgarden}.}, then so is $G^\alpha.$
\qed
\end{proof}

If all weights are 1, then SC-boundedness follows from strong SC-bounded\-ness. To see this, consider the case where $\alpha = \textbf{0}$ and carry out the proof of Theorem \ref{corresp_relationship} for $s$ instead of $(\bar s_i, s\mi).$

\section{Minsum Machine Scheduling}
A \emph{scheduling game} $G = (m,n,(p_{ij})_{i \in M, j \in N}, (w_j)_{j \in N})$ consists of a set of jobs (players) $[n] = \{1, \ldots, n\}$ and a set of machines $[m] = \{1, \ldots, m\}$. For each machine $i$ and job $j$, $p_{ij} \in \mathbb R_+$ denotes the \emph{processing time} of $j$ on $i$. Furthermore, $w_j$ is the \emph{weight} of job $j$. The strategy space $\Sigma_i$ of a job $j$ is simply the set of machines. By $\emptyset_i= \emptyset$ we mean the strategy where $i$ uses no machine.

Let $x$ be a strategy profile. For a machine $i$, we denote by $X_i$ the set of jobs that are scheduled on $i$. Furthermore, $x_j$ denotes the machine $j$ is assigned to. Following the notation by Cole et al. \cite{cole11}, we define $\rho_{ij} = \frac{p_{ij}}{w_j}$. We assume that the jobs on a machine are scheduled in increasing order of $\rho_{ij}$, which is known as \emph{Smith's rule} \cite{smith}; if two jobs on a machine have the same time-to-weight ratio, we use a tie-breaking rule. The \emph{cost} $C_j$ of job $j$ which it seeks to minimize is simply its completion time. In the following, we assume for simplicity that the $\rho_{ij}$ are pairwise distinct (but the results continue to hold without this assumption). Then we can write
\[
C_j(x) = \sum_{k \in X_i :\: \rho_{ik} \leq \rho_{ij}} p_{ik}.
\]
The social cost $C$ we consider is the weighted sum of the players' completion times, i.e., $C = \sum_j w_j C_j$.

In the following, we use the three-field notation by Graham et al \cite{graham}. In this notation, the problem we described is denoted by $R||\sum_j w_j C_j.$ If all weights are $1$, we write $\sum_j C_j$ instead of $\sum_j w_j C_j$. 
Furthermore, if there are \emph{speeds} $s_i$ for each machine $i$ and \emph{fixed} processing times $p_j$ for each job such that $p_{ij} = p_j / s_i$, we write $Q$ instead of $R$. Finally, if we have in addition identical speeds $s_i = 1$ for all machines $i$, the problem is denoted by $P$.

\subsection{$R || \sum_j w_j C_j$}
\begin{lemma}[\cite{cole11}]\label{lemma_cole}
For all strategy profiles $x$ and $x^*$,
\[
\sum_{i \in [m]} \sum_{j \in X_i^*} w_j p_{ij} + \sum_{i \in [m]} \sum_{j \in X_i^*} \sum_{k \in X_i} w_j w_k \min\{\rho_{ij}, \rho_{ik}\} \leq 2C(x^*) + \frac{1}{2} C(x),
\]
where $X_i^*$ is defined similarly to $X_i$ as $X_i^* = \{j \in J \st x^*_j = i\}.$
\end{lemma}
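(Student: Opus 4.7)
My plan is to split the inequality machine-by-machine and then use an integral representation of $\min\{\cdot,\cdot\}$ to reduce the task to a pointwise AM-GM inequality. Both sides of the claimed inequality decompose additively over the machines, so it suffices to prove, for each fixed machine $i$,
\[
\sum_{j \in X_i^*} w_j p_{ij} + \sum_{j \in X_i^*,\, k \in X_i} w_j w_k \min\{\rho_{ij}, \rho_{ik}\} \,\leq\, 2\,C^*_i + \tfrac{1}{2}\,C_i,
\]
where $C^*_i = \sum_{j \in X_i^*} w_j C_j(x^*)$ and $C_i = \sum_{k \in X_i} w_k C_k(x)$ are the contributions of machine $i$ to $C(x^*)$ and $C(x)$. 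Throughout I fix such an $i$, drop the machine index, and write $J = X_i^*$, $K = X_i$, $p_j = p_{ij}$, $\rho_j = \rho_{ij}$.

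The preliminary identity, obtained by expanding $C_j(x^*) = \sum_{j' \in J : \rho_{j'} \leq \rho_j} p_{j'}$, rewriting $p_{j'} = w_{j'} \rho_{j'} = w_{j'} \min\{\rho_j, \rho_{j'}\}$, and symmetrising over ordered pairs, is
\[
2\,C^*_i \,=\, \sum_{j \in J} w_j p_j + \sum_{j, j' \in J} w_j w_{j'} \min\{\rho_j, \rho_{j'}\},
\]
and analogously $2\,C_i = \sum_{k \in K} w_k p_k + \sum_{k, k' \in K} w_k w_{k'} \min\{\rho_k, \rho_{k'}\}$. Substituting into the target and cancelling $\sum_{j \in J} w_j p_j$, the per-machine claim reduces to the cross-term inequality
\[
\sum_{j \in J,\, k \in K} w_j w_k \min\{\rho_j, \rho_k\} \,\leq\, \sum_{j, j' \in J} w_j w_{j'} \min\{\rho_j, \rho_{j'}\} + \tfrac{1}{2}\,C_i.
\]

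To prove this I linearise every $\min$ via $\min\{a,b\} = \int_0^\infty \mathbf{1}[a \geq t]\,\mathbf{1}[b \geq t]\,dt$ for $a,b \geq 0$. With $W^J(t) := \sum_{j \in J :\, \rho_j \geq t} w_j$ and $W^K(t) := \sum_{k \in K :\, \rho_k \geq t} w_k$, the three double sums become $\int_0^\infty W^J(t) W^K(t)\,dt$, $\int_0^\infty W^J(t)^2\,dt$, and $\int_0^\infty W^K(t)^2\,dt$ respectively; the last of these satisfies $\int_0^\infty W^K(t)^2\,dt = 2\,C_i - \sum_{k \in K} w_k p_k \leq 2\,C_i$. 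The elementary pointwise inequality $W^J(t)\,W^K(t) \leq W^J(t)^2 + \tfrac{1}{4}\,W^K(t)^2$, which is just $(W^J(t) - \tfrac{1}{2} W^K(t))^2 \geq 0$, integrates to the cross-term bound, and summing over $i$ gives the lemma.

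The main obstacle is selecting the right coefficient in the AM-GM step, namely $\tfrac{1}{4}$ on the $K$-side, so that the integrated bound matches the factor $\tfrac{1}{2}$ in front of $C(x)$ required by the lemma. A direct Cauchy-Schwarz on the cross term would yield $(\sum_k \sqrt{w_k p_k})^2$, which is awkward to compare with the Smith's-rule quadratic form; the integral representation sidesteps this by converting the quadratic-form inequality into a pointwise inequality between nonnegative functions, for which the coefficient choice is forced by the identity $(a - \tfrac{1}{2} b)^2 \geq 0$.
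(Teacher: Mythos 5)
Your proof is correct. Note that the paper itself does not prove this lemma at all --- it simply points to the proof of Theorem 3.2 in the cited work of Cole et al.\ --- so what you have produced is a self-contained argument where the paper offers only a citation. Substantively, your route is essentially a reconstruction of the Cole et al.\ technique: your functions $W^J(t),W^K(t)$ obtained from the representation $\min\{a,b\}=\int_0^\infty \mathbf{1}[a\geq t]\,\mathbf{1}[b\geq t]\,dt$ are exactly their ``inner product space'' embedding of a job set, your per-machine identity $2C_i=\sum_{k}w_kp_{ik}+\sum_{k,k'}w_kw_{k'}\min\{\rho_{ik},\rho_{ik'}\}$ (valid under the paper's standing assumption of distinct ratios, and in fact also with ties since tied jobs contribute symmetrically) is their key identity for Smith's-rule cost, and your pointwise step $W^JW^K\leq (W^J)^2+\tfrac14(W^K)^2$ packages their Cauchy--Schwarz-plus-AM--GM step into a single inequality with the constants $(2,\tfrac12)$ forced by $(W^J-\tfrac12 W^K)^2\geq 0$. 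All the individual steps check out: the machine-by-machine decomposition of both sides is legitimate because $C(x^*)=\sum_i C_i^*$ and $C(x)=\sum_i C_i$, the symmetrization producing the diagonal term $\sum_j w_jp_{ij}$ is exact, Tonelli justifies swapping the finite sums with the integral, and discarding $\sum_k w_kp_{ik}\geq 0$ only weakens the bound in the right direction. So the proposal is a correct and arguably cleaner write-up of the same underlying argument rather than a genuinely different method.
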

\begin{proof}
The claim is shown in the proof of \cite[Theorem 3.2]{cole11}.
\qed
\end{proof}

\begin{theorem}
Let $G$ be an instance of $R || \sum_j w_j C_j$ that satisfies the following condition for all jobs $j, k$ and all machines $i$: $\rho_{ij} \leq \rho_{ik}$ implies $w_j \leq w_{k}$ (i.e., if $k$ gets scheduled after $j$ on $i$, then it is because of its processing time, not its weight). Then the robust PoA of all friendship extensions $G^\alpha$ of $G$ is at most $4$.
\end{theorem}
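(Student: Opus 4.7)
The plan is to invoke Theorem \ref{corresp_relationship}: it suffices to show that $G$ is strongly SC-bounded and that the corresponding SCG $\bar G$ is $(2,\tfrac{1}{2})$-smooth, since then the robust PoA of $\bar G$ is at most $\tfrac{2}{1-1/2}=4$ and this bound transfers to every friendship extension $G^\alpha$. The natural default strategy is $\emptyset_j = \emptyset$, meaning job $j$ is not assigned to any machine.

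First I would verify the three strong SC-boundedness conditions for $G$ (with $j,k$ playing the role of players and $i$ reserved for machines). Condition (1), $C_j(\emptyset_j, x\mj) = 0$, is immediate from the definition of completion time. Condition (2), $C_k(\emptyset_j, x\mj) \leq C_k(x)$, holds because removing a job $j$ from machine $i = x_j$ can only cause jobs $k \in X_i$ with $\rho_{ik} > \rho_{ij}$ to finish $p_{ij}$ time units earlier and affects no other job. Condition (3) is where the theorem's weight hypothesis enters. A short computation shows that the only $k$ contributing positively to $C_k(x) - C_k(\emptyset_j, x\mj)$ are $k=j$ (with value $C_j(x)$) and $k \in X_i \setminus \{j\}$ with $\rho_{ik} > \rho_{ij}$ (each with value $p_{ij}$). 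After cancelling the $w_j C_j(x)$ term on both sides, condition (3) collapses to requiring $w_j \leq w_k$ for every $k \in X_i \setminus \{j\}$ with $\rho_{ik} > \rho_{ij}$, which is exactly the theorem's hypothesis applied to $\rho_{ij} < \rho_{ik}$.

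Next I would bound the robust PoA of $\bar G$. Directly from the definition and the computation above, $\bar C_j(x) = w_j C_j(x) + \sum_{k \in X_{x_j} \setminus \{j\},\, \rho_{x_j,k} > \rho_{x_j,j}} w_k\, p_{x_j,j}$. For any optimum $x^*$ and any profile $x$, I would substitute this formula into $\sum_j \bar C_j(x^*_j, x\mj)$, rewrite $w_j p_{ik}$ as $w_j w_k \rho_{ik}$ (when $\rho_{ik} < \rho_{ij}$) and $w_k p_{ij}$ as $w_j w_k \rho_{ij}$ (when $\rho_{ik} > \rho_{ij}$), and identify these as $w_j w_k \min\{\rho_{ij},\rho_{ik}\}$ terms. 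The resulting expression matches the left-hand side of Lemma \ref{lemma_cole} up to non-negative diagonal contributions from $k=j$ that only strengthen the inequality on my side. Lemma \ref{lemma_cole} then yields $\sum_j \bar C_j(x^*_j, x\mj) \leq 2\,C(x^*) + \tfrac{1}{2}\,C(x)$, which is exactly $(2,\tfrac{1}{2})$-smoothness of $\bar G$ with witness $\bar x = x^*$. Theorem \ref{corresp_relationship} then concludes the proof.

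The main obstacle is condition (3) of strong SC-boundedness: this is precisely the step that requires the weight monotonicity hypothesis, and without it the reduction to $\bar G$ would fail (which is why the theorem carries this side condition). The remainder of the argument is a comparatively routine repackaging of the known selfish-case analysis encapsulated in Lemma \ref{lemma_cole}, and the reduction technique of Theorem \ref{corresp_relationship} absorbs all the friendship-specific complications automatically.
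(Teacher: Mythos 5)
Your proposal is correct and follows essentially the same route as the paper: verify strong SC-boundedness (with the weight hypothesis entering exactly in condition (3)), compute $\bar C_j(x^*_j, x\mj)$ and bound it by $w_j p_{ij} + \sum_{k \in X_i} w_j w_k \min\{\rho_{ij},\rho_{ik}\}$, invoke Lemma \ref{lemma_cole} to get $(2,\tfrac12)$-smoothness of $\bar G$, and conclude via Theorem \ref{corresp_relationship}. No gaps worth noting.
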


For jobs $j$ and $k$, $\alpha_{jk}$ has an influence on $j$'s strategy in an equilibrium only if there is a machine $i$ such that $k$ gets scheduled after $j$ on $i$ because $j$ cannot influence $k$'s costs otherwise. Hence the weight condition tells us that the only jobs that could potentially have an influence on $j$ are in fact the jobs that are at least equally important as $j$. Hence $j$ cannot `misplace his affections' and care too much about unimportant jobs.

\begin{proof} 
First we show that $G$ is strongly SC-bounded. Clearly, (1) and (2) are satisfied. It remains to show that (3) holds. For all jobs $j$ and strategy profiles $x$,
\[
\sum_k (C_k(x) - C_k(\emptyset, x\mi)) = C_j(x) + \sum_{k :\: x_k = x_j, \, \rho_{x_jk}>\rho_{x_jj}} p_{x_jj}.\]
It follows that if $i = x_j$, then
\begin{eqnarray*}
w_j \sum_k (C_k(x) - C_k(\emptyset, x\mi)) &=&  w_j \Big(C_j(x) + \sum_{k \in X_i: \:\rho_{ik} > \rho_{ij}} p_{i j} \Big) \\
&\leq &   w_j C_j(x) + \sum_{k \in X_i: \:\rho_{ik} > \rho_{ij}} w_{k} p_{ij}  = \bar C_j(x),
\end{eqnarray*}
where the inequality follows from the condition on the weights. So $G$ is indeed strongly SC-bounded.

We calculate 
\begin{eqnarray*}
\bar C_j (x^*_j, x\mj) &=& w_j C_j(x_j^*, x\mj) + \sum_{k \in X_i: \:\rho_{ik} > \rho_{ij}} w_{k} p_{ij} \\
&=& w_j p_{ij} + \sum_{k \in X_i: \: \rho_{ik} < \rho_{ij}} w_k w_j \rho_{ik} + \sum_{k \in X_i: \:\rho_{ik} > \rho_{ij}} w_{k} w_j \rho_{ij} \\
&\leq & w_j p_{ij} + \sum_{k \in X_i} w_j w_k \min\{\rho_{ij}, \rho_{ik}\}.
\end{eqnarray*}
Summing over all machines $i$ and $j \in X_i^*$, this is the same expression as in Lemma \ref{lemma_cole}. 
Hence 
$\sum_j \bar C_j (x^*_j, x\mj) \leq 2C(x^*) + \frac{1}{2} C(x)$
and $\bar G$ is $(2, 1/2)$-smooth. It follows by Theorem \ref{corresp_relationship} that the robust PoA in the friendship model is at most $4$.
\qed 
\end{proof}

This bound is \textbf{tight}: \cite{correa12} shows that the pure PoA for $RP||\sum_jC_j$ is 4. $RP||\sum_j C_j$ is almost defined as $P||\sum_j C_j$ with the exception that each player $i$ can only use a \emph{subset} of the set of machines, i.e. $\Sigma_i \subseteq [m]$ ($R$ stands for \emph{restricted}). Consider an instance of $RP||\sum_jC_j$. We can simulate restrictions in the $R || \sum_jC_j$ setting by letting $p_{ij} > \max_{x} C_j(x)$ for machines $i$ that are not allowed for $j$, where $x$ runs over the feasible schedules of the original instance. Then $j$ neither chooses $i$ in a Nash equilibrium nor in the optimal schedule. Hence the PoA stays the same in the new game. Thus the lower bounds in \cite{correa12} also work for our setting.

The \textbf{weight condition} is \textbf{necessary}. In fact, if we drop it, the pure PoA is unbounded even for $P||\sum_j w_j C_j$ instances with unit-size jobs. An illustrating example is given in Appendix~\ref{app:weight-example}.

\subsection{$P || \sum_j C_j$}
Fix an ordering of the jobs such that $p_j>p_{j'}$ implies $j>j'$. We use the same notation as in  \cite{hoeksmauetz}: For a schedule $x$, a job $j$ and a machine $i$, let $h^x_i(j) = |\{j' > j \st x_{j'} = x_j\}|$. This is the number of jobs that are scheduled after $j$ on $i$. Using this notation, we can write $\bar C_j(x) = C_j(x) + h_{x_j}^x(j) \cdot p_j$ for instances with unit speeds.

Throughout this section, let $\bar x$ denote the randomized schedule that assigns each job to each machine with probability $\frac{1}{m}.$

\begin{lemma}\label{mixedcost}
Let $x$ be an arbitrary schedule.
Then 
$$\sum_j \E[C_j (\bar x_j, x_{-j})] = \frac{1}{m} \sum_j p_j (m+n-j).$$
\end{lemma}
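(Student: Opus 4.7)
The plan is to unfold the expectation by conditioning on the machine to which job $j$ is assigned, then swap the order of summation and collect terms. Since all machines are identical with unit speed, $\rho_{ij} = p_j$ for every machine $i$, so Smith's rule on any machine orders its jobs by increasing $p_j$, which by the chosen indexing convention coincides with the index order.

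Fix a job $j$ and condition on $\bar x_j = i$. With the other jobs fixed as $x_{-j}$, the jobs occupying machine $i$ together with $j$ are $\{k \neq j : x_k = i\} \cup \{j\}$, and those that complete no later than $j$ are precisely $j$ itself and the jobs $k \neq j$ with $x_k = i$ and $k < j$. Hence
\[
C_j(\bar x_j = i,\, x_{-j}) \;=\; p_j + \sum_{k \neq j:\, x_k = i,\, k < j} p_k.
\]
Averaging over $i \in [m]$ with probability $\tfrac{1}{m}$ each, the cross sum $\sum_{i=1}^m \sum_{k \neq j:\, x_k = i,\, k < j} p_k$ collapses, because every job $k \neq j$ with $k < j$ lies on exactly one machine under $x$, giving $\sum_{k < j} p_k$. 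Thus
\[
\mathbb{E}\big[C_j(\bar x_j, x_{-j})\big] \;=\; p_j + \frac{1}{m} \sum_{k < j} p_k.
\]

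Summing over $j$ and switching the order of summation in the double sum,
\[
\sum_{j=1}^n \sum_{k < j} p_k \;=\; \sum_{k=1}^n p_k \,(n-k),
\]
so that
\[
\sum_j \mathbb{E}\big[C_j(\bar x_j, x_{-j})\big] \;=\; \sum_j p_j + \frac{1}{m}\sum_j p_j (n-j) \;=\; \frac{1}{m}\sum_j p_j (m + n - j),
\]
which is the required identity.

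The proof is essentially bookkeeping; the only mild subtlety is recognizing that once the other jobs are fixed by $x_{-j}$, the contribution of jobs $k < j$ to $C_j(\bar x_j, x_{-j})$ depends on $\bar x_j$ only through the indicator that $x_k = \bar x_j$, and averaging this indicator over a uniform machine yields $1/m$ independently of $x$. This is precisely what makes the final expression depend only on the processing times and the numbers $m$ and $n$, not on the schedule $x$.
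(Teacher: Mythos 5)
Your proof is correct and follows essentially the same route as the paper: you compute $\E[C_j(\bar x_j,x_{-j})] = p_j + \frac1m\sum_{k<j}p_k$ (the paper does this directly via $P(\bar x_j = x_k)=1/m$, you via conditioning on the machine, which is the same computation) and then reorder the double sum to obtain $\frac1m\sum_j p_j(m+n-j)$. No issues.
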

Note that, surprisingly, this is independent of $x$.
\begin{proof} 
Clearly,
\[
\sum_j \E[C_j(\bar x_j, x_{-j})] = \sum_j \Big (p_j + \sum_{k < j}  p_k \cdot P(\bar x_j = x_{k})\Big ) = \sum_j \Big( p_j + \sum_{k<j} \frac{p_k}{m} \Big ).
\]
Reordering the second sum gives us
$$
\sum_j \E[C_j(\bar x_j, x_{-j})] 
= \sum_j p_j + \sum_j \sum_{k>j} \frac{p_j}{m} = \frac{1}{m} \sum_j p_j \left( m + n - j\right). 
\eqno\qed
$$
\end{proof}

The following theorem will be helpful to establish an upper bound on the robust PoA for the friendship model and might be of independent interest. We defer its proof to Appendix~\ref{app:MFT}.

\begin{theorem}\label{thm:rpoa-p} For any schedule $x$ and any optimal 
$x^*$, 
\[\sum_j C_j(\bar x_j, x\mj) \leq C(x^*) + \Big(\frac{1}{2}-\frac{1}{2m}\Big)\sum_j p_j.\]
In particular, the robust price of anarchy of $P || \sum_j C_j$ is at most $ \frac{3}{2} - \frac{1}{2m}$. This bound is tight.
\end{theorem}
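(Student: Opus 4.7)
The plan is to combine Lemma \ref{mixedcost}, which tells us that $\sum_j \E[C_j(\bar x_j, x\mj)] = \frac{1}{m}\sum_j p_j(m+n-j)$ is independent of $x$, with a matching lower bound on the optimum, namely
\[C(x^*) \geq \frac{1}{m}\sum_j p_j(n-j) + \frac{m+1}{2m}\sum_j p_j.\]
Granting this, the first displayed inequality of the theorem follows by elementary algebra: the difference between $\frac{1}{m}\sum_j p_j(m+n-j)$ and the lower bound on $C(x^*)$ equals $\sum_j p_j - \frac{m+1}{2m}\sum_j p_j = \frac{m-1}{2m}\sum_j p_j = (\frac{1}{2}-\frac{1}{2m})\sum_j p_j$, as desired.

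To prove the lower bound, I would exploit that the optimum for $P\|\sum_j C_j$ is SPT. With jobs indexed so that $p_1 \leq p_2 \leq \cdots \leq p_n$, SPT puts job $j$ on machine $((j-1)\bmod m)+1$, which yields the explicit formula $C(x^*) = \sum_j p_j(\lfloor(n-j)/m\rfloor + 1)$. Writing $\lfloor(n-j)/m\rfloor = (n-j)/m - b_j/m$ with $b_j := (n-j)\bmod m$, the claimed lower bound reduces to the purely combinatorial inequality $\sum_j p_j\bigl(m-1-2b_j\bigr) \geq 0$.

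I would establish the latter by partitioning the indices into blocks $B_k = \{j : \lfloor(n-j)/m\rfloor = k\}$. On each \emph{complete} block (of size $m$), the values $b_j$ run through $m-1, m-2, \ldots, 0$ as $j$ increases, so the sequence $(m-1-2b_j)$ is non-decreasing in $j$ and sums to zero across the block; since $p_j$ is also non-decreasing, Chebyshev's sum inequality gives a nonnegative contribution from each complete block. The (unique) \emph{partial} block at the top, consisting of indices $j = 1, \ldots, n \bmod m$, needs a small separate computation: there the block sum of $(m-1-2b_j)$ works out to $r(m-r) \geq 0$ where $r = n \bmod m$, and Chebyshev applied with a nonzero but nonnegative mean again yields a nonnegative weighted sum.

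For the ``in particular'' claim, given any coarse equilibrium $\sigma$, applying the equilibrium condition with the uniform-random deviation $\bar x_j$ and summing over $j$ gives $\E_\sigma[C(s)] \leq C(x^*) + (\frac{1}{2}-\frac{1}{2m})\sum_j p_j$; the trivial bound $\sum_j p_j \leq C(x^*)$ (each completion time is at least $p_j$) then yields $\E_\sigma[C(s)] \leq (\frac{3}{2}-\frac{1}{2m})\,C(x^*)$. Tightness can be witnessed by $m$ unit jobs on $m$ machines with the symmetric mixed Nash equilibrium in which every job picks a machine uniformly at random: the expected total cost is $(3m-1)/2$ against optimum $m$. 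The main obstacle I anticipate is the partial-block step of the Chebyshev argument, since that block does not have the sum-zero property of the complete blocks and one has to compute its sum explicitly before Chebyshev applies.
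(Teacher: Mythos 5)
Your proposal is correct, and its skeleton coincides with the paper's: Lemma \ref{mixedcost} for the deviation cost, the exact formula $C(x^*)=\sum_j p_j(\lfloor (n-j)/m\rfloor+1)$ for the optimum (which the paper derives via the MFT algorithm in Lemma \ref{optimalcost} and you quote as the standard SPT/round-robin fact), a reduction to the combinatorial inequality $\sum_j p_j(m-1-2b_j)\ge 0$ with $b_j=(n-j)\bmod m$, the bound $\sum_j p_j\le C(x^*)$ to obtain $(\tfrac32-\tfrac1{2m},0)$-smoothness, and the same tightness instance ($m$ unit jobs, uniform mixed equilibrium of expected cost $(3m-1)/2$ against optimum $m$). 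The one genuine difference lies in how the combinatorial inequality is established. The paper isolates it as Lemma \ref{linear_weights}, proved by writing each $p_j$ as a sum of nonnegative increments $\delta_k$, and then applies that lemma rather tersely to the coefficients $(n-j)\bmod m$; you instead decompose the jobs into the residue blocks $B_k=\{j:\lfloor (n-j)/m\rfloor=k\}$, note that on a complete block the coefficients $m-1-2b_j$ are non-decreasing in $j$ and sum to zero, and invoke Chebyshev's sum inequality, with the explicit computation that the partial block's coefficient sum is $r(m-r)\ge 0$ for $r=n\bmod m$. I verified the details: the algebra turning your lower bound on $C(x^*)$ into $\sum_j p_j(m-1-2b_j)\ge 0$ is right, the block sums and monotonicity are as you state, and Chebyshev applies since $p_j$ and the coefficients are similarly ordered within each block. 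Your version makes explicit the blockwise and partial-block bookkeeping that the paper's application of Lemma \ref{linear_weights} (stated only for $m$ reals with coefficients $(m-j)/m$) leaves implicit, while the paper's version trades Chebyshev for a self-contained elementary lemma; both give identical smoothness parameters and the same PoA bound.
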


\begin{theorem}
Let $G$ be an instance of $P||\sum_j C_j$. Then the robust PoA for any friendship extension $G^\alpha$ is at most $2$.
\end{theorem}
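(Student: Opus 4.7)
The plan is to leverage Theorem \ref{corresp_relationship} by first verifying that $P||\sum_j C_j$ is strongly SC-bounded and then establishing $(\lambda,\mu)$-smoothness of the corresponding SCG $\bar G$ with $\lambda/(1-\mu)\leq 2$. Strong SC-boundedness is immediate: condition (1) holds because a job not scheduled on any machine has completion time $0$; condition (2) holds because removing job $j$ can only decrease the completion times of jobs sharing its machine; and since all weights are $1$ with $C=\sum_j C_j$, the alternative condition (3b) trivially holds and implies (3). Hence it suffices to bound the robust PoA of $\bar G$, where $\bar C_j(x) = C_j(x) + h^x_{x_j}(j)\cdot p_j$ (the contribution of $j$ to social cost, cf.\ the preamble of this section).

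For smoothness, I will use the uniform randomized profile $\bar x$ (each job placed on each machine with probability $1/m$), which is already the deviation employed in Theorem~\ref{thm:rpoa-p}. Summing $\bar C_j(\bar x_j,x_{-j})$ over $j$ and taking expectations splits into two pieces:
\[
\sum_j \E[\bar C_j(\bar x_j,x_{-j})] \;=\; \sum_j \E[C_j(\bar x_j,x_{-j})] \;+\; \sum_j p_j\,\E[h^{(\bar x_j,x_{-j})}_{\bar x_j}(j)].
\]
Lemma~\ref{mixedcost} evaluates the first sum to $\frac{1}{m}\sum_j p_j(m+n-j)$ independently of $x$. For the second sum, each job $k>j$ shares the machine of $j$ under $\bar x_j$ with probability $1/m$, so $\E[h^{(\bar x_j,x_{-j})}_{\bar x_j}(j)] = (n-j)/m$, giving $\frac{1}{m}\sum_j p_j(n-j)$.

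The key idea is to avoid bounding $\sum_j p_j(n-j)/m$ directly and instead use Theorem~\ref{thm:rpoa-p} combined with Lemma~\ref{mixedcost}. The theorem gives $\sum_j \E[C_j(\bar x_j,x_{-j})] \leq C(x^*) + \bigl(\tfrac{1}{2}-\tfrac{1}{2m}\bigr)\sum_j p_j$, while Lemma~\ref{mixedcost} rewrites the left-hand side as $\sum_j p_j + \tfrac{1}{m}\sum_j p_j(n-j)$. Rearranging yields
\[
\tfrac{1}{m}\sum_j p_j(n-j) \;\leq\; C(x^*) - \bigl(\tfrac{1}{2}+\tfrac{1}{2m}\bigr)\sum_j p_j.
\]
Plugging this back into the expression for $\sum_j \E[\bar C_j(\bar x_j,x_{-j})] = \sum_j p_j + 2\cdot\tfrac{1}{m}\sum_j p_j(n-j)$ gives an upper bound of $2C(x^*) - \tfrac{1}{m}\sum_j p_j \leq 2C(x^*)$. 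Thus $\bar G$ is $(2,0)$-smooth, and Theorem~\ref{corresp_relationship} yields robust PoA at most $2$ for every friendship extension $G^\alpha$.

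The main obstacle is the second sum $\sum_j p_j(n-j)/m$: bounding it against $C(x^*)$ via a standard SPT argument would give a suboptimal constant. The nontrivial step is recognizing that this exact quantity already appears inside Lemma~\ref{mixedcost}, so Theorem~\ref{thm:rpoa-p} provides the needed bound for free, letting us absorb the additional $h^x_{x_j}(j)\cdot p_j$ term from the SCG costs at the price of exactly one extra factor of $C(x^*)$ and no $\mu$-term at all.
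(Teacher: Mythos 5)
Your proposal is correct and follows essentially the same route as the paper: use the uniform randomized deviation $\bar x$, split $\sum_j \E[\bar C_j(\bar x_j, x_{-j})]$ into the base-cost term plus the $h\cdot p_j$ term, and combine Lemma~\ref{mixedcost} with Theorem~\ref{thm:rpoa-p} to get the bound $2C(x^*) - \frac{1}{m}\sum_j p_j \leq 2C(x^*)$, i.e., $(2,0)$-smoothness of $\bar G$ and hence robust PoA $2$ via Theorem~\ref{corresp_relationship}. The only cosmetic difference is that the paper rewrites the second term as $\sum_j \E[C_j(\bar x_j, x_{-j})] - \sum_j p_j$ via the symmetry $\E[h^x_{\bar x_j}(j)] = \E[h^{\bar x}_{x_j}(j)]$, whereas you evaluate it directly as $\frac{1}{m}\sum_j p_j(n-j)$; your explicit check of strong SC-boundedness (using that all weights are $1$) is also exactly what the paper relies on.
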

\begin{proof}
Let $x$ be arbitrary. Then by linearity of expectation,
\[
\E \Big[ \sum_j \bar C_j(\bar x_j, x\mj) \Big] = \sum_j \E[C_j(\bar x_j, x\mj)] + \sum_j \E[h^x_{\bar x_j}(j)] \cdot p_j.
\]
We know that
\[
\E[h^x_{\bar x_j}(j)] = \frac{1}{m} \sum_i h^x_i(j) = \frac{1}{m} |\{j' \in J \st j' > j\}| = \E[h^{\bar x}_{x_j}(j)].
\]
Hence the second term evaluates as
\[
\sum_j \E[h^x_{\bar x_j}(j)] \cdot p_j 
= \sum_j \E[h_{x_j}^{\bar x} (j)] \cdot p_j 
= \sum_j \E[C_j(\bar x_j, x_j)] -\sum_j p_j.
\]
We know by Theorem \ref{thm:rpoa-p} that $\sum_j \E[C_j(\bar x_j, x \mj)] \leq C(x^*) + (\frac{1}{2}-\frac{1}{2m}) \sum_j p_j$.
Hence 
\begin{eqnarray*}
\sum_j  \E[\bar C_j(\bar x_j, x\mj)] &=& 2 \sum_j \E[ C_j(\bar x_j, x \mj)] - \sum_j p_j \\
&\leq & 2C(x^*) -\frac{1}{m} \sum_j p_j  \leq  2 C(x^*),
\end{eqnarray*}
for any schedule $x^*$. Hence the robust PoA for the friendship extension is at most $2$.
\qed
\end{proof}

\section{Linear Congestion Games} 

An \emph{atomic congestion game} $G=(N, E, \{\Sigma_i\}_{i \in N}, (d_e)_{e \in E})$ is given by a set $E$ of \emph{resources} together with \emph{delay functions} $d_e: \mathbb N \to \mathbb R_+$ indicating the delay on $e$ for a given number of players using $e$. Each player's strategy set consists of subsets of $E$; $\Sigma_i \su \mathcal P(E)$ for all $i$. For $s \in \Sigma$, let $x_e(s) = |\{ i \in N \st e \in s_i\}|.$ The cost of each player $i$ under $s$ is given by $C_i(s) = \sum_{e \in s_i} d_e(x_e(s))$.
If all delay functions are linear, we say that $G$ is \emph{linear}. The social cost $C$ is simply the sum over all individual cost. By $\emptyset_i = \emptyset$ we mean the strategy where player $i$ uses no machine.

It is known that we can without loss of generality assume that all latency functions are of the form $l_e(x) = x$. This was first mentioned in \cite{christodoulou}. For a proof, see \cite{altruistic_games}. The following lemma is shown in the proof of \cite[Theorem 1]{christodoulou}.

\begin{lemma}[\cite{christodoulou}]\label{congestion_base_game}
Let $G$ be a linear congestion game and $s, s^* \in \Sigma.$ Then 
\[
\sum_i C_i(s_i^*, s\mi) \leq \sum_e x_e(s^*)(x_e(s) + 1).
\]
\end{lemma}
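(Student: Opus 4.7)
The plan is to leverage the reduction mentioned right before the lemma and assume without loss of generality that each delay function has the normalized form $d_e(x) = x$, so individual costs become $C_i(s) = \sum_{e \in s_i} x_e(s)$. Once in this form, the inequality reduces to a pointwise counting statement, and I would prove it by first deriving a per-resource bound on the congestion after a unilateral deviation and then exchanging the order of summation.

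First I would fix a player $i$ and an arbitrary resource $e \in s_i^*$. Writing $x_e(s\mi) := |\{k \neq i : e \in s_k\}|$ for the congestion on $e$ without $i$'s contribution, the identity $x_e(s_i^*, s\mi) = x_e(s\mi) + [e \in s_i^*]$ holds by definition. Combined with the trivial monotonicities $x_e(s\mi) \leq x_e(s)$ and $[e \in s_i^*] \leq 1$, this gives the key pointwise bound
\[
x_e(s_i^*, s\mi) \;\leq\; x_e(s) + 1 \qquad \text{for every } e \in s_i^*.
\]
Summing over $e \in s_i^*$ yields the per-player estimate $C_i(s_i^*, s\mi) \leq \sum_{e \in s_i^*}(x_e(s)+1)$.

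Finally I would sum this bound over all players $i$ and swap the order of summation, noting that $e$ contributes to the right-hand side once for each player $i$ with $e \in s_i^*$, that is, exactly $x_e(s^*)$ times. This delivers $\sum_i C_i(s_i^*, s\mi) \leq \sum_e x_e(s^*)(x_e(s)+1)$, as required.

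There is no real obstacle in this argument: the whole content lies in the observation that a unilateral deviation can inflate the congestion on any single resource by at most one. The only step that requires any care is invoking the reduction to unit-slope delays $d_e(x) = x$, but this is precisely what the paragraph preceding the lemma imports from \cite{christodoulou,altruistic_games}.
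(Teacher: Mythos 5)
Your proof is correct and is essentially the standard argument: the paper itself only cites the proof of Theorem~1 in \cite{christodoulou}, and the argument there is exactly your deviation bound, namely that after player $i$'s unilateral switch to $s_i^*$ each resource $e \in s_i^*$ has congestion at most $x_e(s)+1$, followed by summing over players and swapping the order of summation so that each $e$ is counted $x_e(s^*)$ times. Your reliance on the normalization $d_e(x)=x$ is also exactly what the paper intends, since the lemma is stated after that reduction is invoked.
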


\begin{lemma}[\cite{bilo}]\label{lem:bilo}
For any pair $\alpha, \beta \in \mathbb N$, it holds that $\frac{2}{5} \alpha^3 + \frac{17}{5} \beta^2 \geq \beta(\alpha+1).$
\end{lemma}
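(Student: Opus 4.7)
My plan is to clear denominators and prove the equivalent form $2\alpha^3 + 17\beta^2 \geq 5\beta(\alpha+1)$. The case $\beta=0$ is immediate since the inequality collapses to $2\alpha^3 \geq 0$, so the substantive case is $\beta \geq 1$. Integrality of $\beta$ is genuinely needed: over nonnegative reals the function $f(\alpha,\beta) := \tfrac{2}{5}\alpha^3 + \tfrac{17}{5}\beta^2 - \beta(\alpha+1)$ has a strictly negative interior minimum (a quick first-order analysis locates it near $(5/12,\,5/24)$), so no purely continuous argument can work.

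The main step is to control the cross term $\alpha\beta$ by applying AM-GM to the three nonnegative quantities $\alpha^3$, $\beta^{3/2}$, $\beta^{3/2}$ (equivalently, Young's inequality with conjugate exponents $3$ and $3/2$), which yields
\[
\alpha\beta \;=\; \sqrt[3]{\alpha^3\cdot\beta^{3/2}\cdot\beta^{3/2}} \;\leq\; \tfrac{1}{3}\alpha^3 + \tfrac{2}{3}\beta^{3/2}.
\]
Multiplying by $5$ and adding $5\beta$, I would then invoke $\beta \geq 1$ to absorb the subquadratic pieces, using $\beta^{3/2} \leq \beta^2$ and $\beta \leq \beta^2$, to obtain $5\beta(\alpha+1) \leq \tfrac{5}{3}\alpha^3 + \tfrac{25}{3}\beta^2$. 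Comparing against the target reduces to the trivial inequality $\tfrac{1}{3}\alpha^3 + \tfrac{26}{3}\beta^2 \geq 0$, and the proof is complete.

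The main obstacle is the choice of exponents $p=3$, $q=3/2$ in the AM-GM step: they are essentially forced, since the target carries $\alpha^3$ and $\beta^2$ and we must generate $\alpha\beta$ from a single balanced application. An alternative route, useful if one prefers to avoid fractional powers altogether, is to read the inequality as a quadratic $17\beta^2 - 5(\alpha+1)\beta + 2\alpha^3 \geq 0$ in $\beta$ and verify nonpositivity of its discriminant $25(\alpha+1)^2 - 136\alpha^3$ for $\alpha \geq 1$, which after expansion reduces to the elementary identity $(3\alpha+1)(\alpha-1)\geq 0$; the case $\alpha=0$ would then be checked directly by observing that $17\beta^2 - 5\beta = \beta(17\beta-5) \geq 0$ for integer $\beta \geq 1$.
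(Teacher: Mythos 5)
Your main argument is correct, and it supplies something the paper deliberately omits: the paper states this lemma with a citation to Bil\`o et al.\ and gives no proof of its own, so there is no internal argument to compare against. Your route --- dispose of $\beta=0$, bound $\alpha\beta\le\tfrac13\alpha^3+\tfrac23\beta^{3/2}$ by AM--GM/Young, then absorb $\beta^{3/2}$ and $\beta$ into $\beta^2$ using $\beta\ge1$, leaving the slack $\tfrac13\alpha^3+\tfrac{26}{3}\beta^2\ge0$ --- is sound at every step, and your observation that the inequality fails over the reals (the critical point is exactly $(\alpha,\beta)=(5/12,5/24)$, where the expression is negative) correctly identifies that what integrality buys is the exclusion of $\beta\in(0,1)$; your argument in fact proves the statement for all real $\alpha\ge0$ and integer $\beta\ge0$. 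Two caveats. In the alternative discriminant route, the claim that $25(\alpha+1)^2-136\alpha^3\le0$ for $\alpha\ge1$ ``reduces after expansion to the identity $(3\alpha+1)(\alpha-1)\ge0$'' is not literally correct: $136\alpha^3-25(\alpha+1)^2$ is not divisible by $(\alpha-1)$ (it equals $36$ at $\alpha=1$). The nonpositivity is still immediate, e.g.\ from $136\alpha^3\ge136\alpha^2\ge25(2\alpha)^2\ge25(\alpha+1)^2$ for $\alpha\ge1$, or by first using $\alpha^3\ge\alpha^2$ and then checking $111\alpha^2-50\alpha-25\ge0$; just state it that way if you keep this variant. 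Separately, be aware that the lemma as printed looks like a typo relative to how the paper uses it: the PoA computation invokes it to conclude $x_e^*(2x_e+1)\le\tfrac{17}{5}(x_e^*)^2+\tfrac25 x_e^2$, i.e.\ the form $\beta(2\alpha+1)\le\tfrac25\alpha^2+\tfrac{17}{5}\beta^2$ from Bil\`o et al., which is strictly stronger than, and not implied by, the printed inequality with $\tfrac25\alpha^3$ and $\beta(\alpha+1)$. Your proof establishes the statement as printed, which is what was asked, but it would not by itself support the $\tfrac{17}{3}$ computation in the surrounding theorem.
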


Bil\`o et al. show in their paper \cite{bilo} that the \emph{pure} PoA lies between 5 and $17/3$ for a restricted friendship setting, where $\alpha_{ij} \in \{0,1\}$ for all $i,j$. We generalize their result to the \emph{robust} PoA for arbitrary $\alpha_{ij} \in [0,1]$ and show tightness.

\begin{theorem}
Let $G$ be a linear congestion game. Then the robust PoA of all friendship extensions $G^\alpha$ is bounded by $\frac{17}{3}\approx 5.67$. This bound is tight.
\end{theorem}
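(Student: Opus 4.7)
The plan is to apply Theorem~\ref{corresp_relationship} and thereby reduce the problem to proving smoothness of the corresponding SCG. Without loss of generality take $d_e(x) = x$, so that $C(s) = \sum_e x_e(s)^2$. With the default strategy $\emptyset_i = \emptyset$ (use no resources), the linear congestion game $G$ is strongly SC-bounded: properties~(1) and~(2) are immediate (a non-participating player pays nothing and can only weakly decrease the congestion on every resource), and since $C = \sum_j C_j$ with unit weights, property~(3) follows from~(3b). Using the telescoping identity $x^2 - (x-1)^2 = 2x - 1$, the SCG cost evaluates to
\[
\bar C_i(s) \;=\; C(s) - C(\emptyset_i, s_{-i}) \;=\; \sum_{e \in s_i}\bigl(2 x_e(s) - 1\bigr).
\]

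Next I establish $(17/5,\,2/5)$-smoothness of $\bar G$, which yields $\lambda/(1-\mu) = 17/3$. For arbitrary profiles $s$ and $s^*$, using $x_e(s_i^*, s_{-i}) = x_e(s_{-i}) + 1$ on $e \in s_i^*$ together with the crude bound $x_e(s_{-i}) \leq x_e(s)$,
\[
\sum_i \bar C_i(s_i^*, s_{-i}) \;=\; \sum_e \sum_{i \,:\, e \in s_i^*}\bigl(2 x_e(s_{-i}) + 1\bigr) \;\leq\; \sum_e x_e(s^*)\bigl(2 x_e(s) + 1\bigr).
\]
It therefore suffices to verify the per-edge integer inequality $\alpha(2\beta + 1) \leq \tfrac{17}{5}\alpha^2 + \tfrac{2}{5}\beta^2$ for all $\alpha, \beta \in \mathbb N$, a bound in the same spirit as Lemma~\ref{lem:bilo}. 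Completing the square in $\beta$ recasts it as $2(\beta - 5\alpha/2)^2 + \tfrac{9}{2}\alpha^2 - 5\alpha \geq 0$, which is automatic for $\alpha = 0$ or $\alpha \geq 2$; for $\alpha = 1$ it reduces to $2(\beta-2)(\beta-3) \geq 0$, holding at every integer $\beta \geq 0$ and tight at $\beta \in \{2,3\}$. Theorem~\ref{corresp_relationship} then delivers the upper bound of $17/3$ on the robust PoA.

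For the matching lower bound, the tight cases $(\alpha,\beta) \in \{(1,2),(1,3)\}$ of the per-edge inequality prescribe the blueprint of the construction: on each resource used in the equilibrium profile, one optimum player should meet two or three equilibrium players. Bil\`o et al.'s instance for the restricted friendship setting ($\alpha_{ij} \in \{0,1\}$) only yields $5$, so the plan is to reuse its combinatorial backbone while allowing fractional affection coefficients $\alpha_{ij} \in [0,1]$, tuning them so that each player's perceived gain from remaining at his equilibrium strategy exactly balances the true gain from deviating to the optimum resource.

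The main obstacle is precisely this calibration of the lower-bound instance. Whereas the upper bound falls out almost mechanically from the SCG reduction and a short quadratic inequality, matching it at $17/3$ requires the deviations of all players to simultaneously saturate the per-edge smoothness bound, which typically forces a delicate layered resource structure together with a nontrivial choice of friendship weights.
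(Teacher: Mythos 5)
Your upper-bound argument is correct and is essentially the paper's own: the reduction via Theorem~\ref{corresp_relationship}, the SCG cost $\bar C_i(s) = \sum_{e \in s_i}(2x_e(s)-1)$, the estimate $\sum_i \bar C_i(s_i^*, s_{-i}) \leq \sum_e x_e(s^*)(2x_e(s)+1)$, and the per-edge integer inequality $\alpha(2\beta+1) \leq \tfrac{17}{5}\alpha^2 + \tfrac{2}{5}\beta^2$ are exactly the steps in the paper, which invokes Lemma~\ref{congestion_base_game} and Lemma~\ref{lem:bilo} where you compute $\bar C_i$ directly and verify the quadratic inequality by completing the square; that is a harmless (arguably cleaner) streamlining, and your case analysis ($\alpha=0$, $\alpha=1$ giving $2(\beta-2)(\beta-3)\geq 0$, $\alpha\geq 2$) is sound.

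The genuine gap is the tightness claim, which is part of the theorem. Your proposal only states a plan (``reuse Bil\`o et al.'s backbone, tune fractional $\alpha_{ij}$'') and explicitly concedes that the calibration is an open obstacle; without an explicit instance and an equilibrium verification, the proved lower bound remains the $5$ of Bil\`o et al., not $17/3$. Moreover, the guiding intuition that fractional coefficients are needed is off: the paper's matching construction uses only binary coefficients $\alpha_{ij} \in \{0,1\}$, i.e.\ it stays inside the restricted friendship setting. Concretely, it takes $n+3$ blocks $B_k = \{a_k, b_k, c_k\}$ with staggered equilibrium strategies $s_{a_k} = \{3k, 3k+1, 3k+2\}$, $s_{b_k} = \{3k+2, 3k+3\}$, $s_{c_k} = \{3k+3, 3k+4\}$, optimum strategies shifted two blocks ahead (single resources $3k+6$, $3k+7$, $3k+8$, with fresh resources patched in for the last two blocks), and sets $\alpha_{ij} = 1$ exactly when $s_i^* \cap s_j \neq \emptyset$ while $\alpha_{ij} = 0$ whenever $s_i \cap s_j \neq \emptyset$. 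One then checks, e.g.\ for $i = a_k$, that the perceived cost at $s$ is $7+5+5+7 = 24$ and equals $4+6+6+8$ after deviating to $s_i^*$, so $s$ is a pure Nash equilibrium of the friendship extension, while each interior block costs $17$ at $s$ versus $3$ at $s^*$, giving a ratio tending to $17/3$. Some explicit construction of this kind, together with the Nash-equilibrium verification, is what your proof is missing.
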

\begin{proof}
We have
\[
\bar C_i(s) = C_i(s) + \sum_{e \in s_i} |\{j \neq i \st e \in s_j\}|= C_i(s) + \sum_{e \in s_i} x_e(\emptyset, s\mi)  \geq C_i(s),
\]
so $G$ is SC-bounded. Also $G$ is strongly SC-bounded: If $i$ does not use any machine, he experiences no waiting time; the other's completion times can only increase if another player enters; and finally, $C = \sum_j C_j$.

Let $s, s^* \in \Sigma$. We abbreviate $x_e(s)$ and $x_e(s^*)$ by $x_e$ and $x_e^*$, respectively. The calculation of the robust PoA for $\bar G$ yields
\[
\sum_i \bar C_i(s_i^*, s\mi) = \sum_i C(s_i^*, s\mi) + \sum_i \sum_{e \in s_i^*} x_e(\emptyset, s \mi).
\]
The first term is at most $\sum_e x^*_e(x_e+1)$ by Lemma \ref{congestion_base_game}. The second term is bounded from above by $\sum_i \sum_{e \in s_i^*} x_e(s) = \sum_{e \in E} x_e x_e^*$. Hence we get in total by Lemma \ref{lem:bilo}
\[
\sum_e x_e^* (2x_e + 1)  \leq \sum_e \left ( \frac{17}{5} (x_e^*)^2 + \frac{2}{5} x_e^2 \right) = \frac{17}{5} C(s^*) + \frac{2}{5} C(s).
\]
It follows that the robust PoA of $\bar G$ is at most
$
\frac{17}{5}/(1-\frac{2}{5}) = \frac{17}{3}.
$

We show now that the bound of $\frac{17}{3}$ is asymptotically tight. Let $n \geq 0$. Consider an instance with $n+3$ blocks of players $B_0, \ldots, B_{n+2}$ consisting of three players each: $B_k = \{a_k, b_k, c_k\}$.  We construct a Nash equilibrium $s$ and an optimal strategy profile $s^*$ as follows. For all resources $e$, we set $l_e(x) = x$. For $0 \leq k \leq n$, the pattern of strategies repeats (see Figure 1). Here player $i = a_k$ has two strategies $s_i = \{3k, 3k+1, 3k+2\}$ and $s_i^* = \{3k + 6\}$. Player $i = b_k$ has two strategies $s_i = \{3k+2, 3k+3\}$ and $s_i^* = \{3k + 7\}$. Player $i = c_k$ has two strategies $s_i = \{3k+3, 3k+4\}$ and $s_i^* = \{3k + 8\}$. 

\begin{figure}
 \centering

\begin{tikzpicture}[scale=0.5]

\draw (-3,0.5) -- (19,0.5);
\draw (-3,-2.5) -- (19, -2.5);
\draw (-3,-5.5) -- (19, -5.5);
\draw (-3,-8.5) -- (19, -8.5);
\draw (-3,-11.5) -- (19, -11.5);

\fill[lightgray] (1,0) ellipse (1.5cm and 0.3cm);
\fill[black] (0,0) circle (0.15);
\fill[black] (1,0) circle (0.15);
\fill[black] (2,0) circle (0.15);

\fill[lightgray] (2.5,-1) ellipse (1cm and 0.3cm);
\fill[black] (2,-1) circle (0.15);
\fill[black] (3,-1) circle (0.15);

\fill[lightgray] (3.5,-2) ellipse (1cm and 0.3cm);
\fill[black] (3,-2) circle (0.15);
\fill[black] (4,-2) circle (0.15);

\fill[lightgray] (4,-3) ellipse (1.5cm and 0.3cm);
\fill[black] (3,-3) circle (0.15);
\fill[black] (4,-3) circle (0.15);
\fill[black] (5,-3) circle (0.15);

\fill[lightgray] (5.5,-4) ellipse (1cm and 0.3cm);
\fill[black] (5,-4) circle (0.15);
\fill[black] (6,-4) circle (0.15);

\fill[lightgray] (6.5,-5) ellipse (1cm and 0.3cm);
\fill[black] (6,-5) circle (0.15);
\fill[black] (7,-5) circle (0.15);

\fill[lightgray] (4+3,-3-3) ellipse (1.5cm and 0.3cm);
\fill[black] (3+3,-3-3) circle (0.15);
\fill[black] (4+3,-3-3) circle (0.15);
\fill[black] (5+3,-3-3) circle (0.15);

\fill[lightgray] (5.5+3,-4-3) ellipse (1cm and 0.3cm);
\fill[black] (5+3,-4-3) circle (0.15);
\fill[black] (6+3,-4-3) circle (0.15);

\fill[lightgray] (6.5+3,-5-3) ellipse (1cm and 0.3cm);
\fill[black] (6+3,-5-3) circle (0.15);
\fill[black] (7+3,-5-3) circle (0.15);

\fill[lightgray] (4+6,-3-6) ellipse (1.5cm and 0.3cm);
\fill[black] (3+6,-3-6) circle (0.15);
\fill[black] (4+6,-3-6) circle (0.15);
\fill[black] (5+6,-3-6) circle (0.15);

\fill[lightgray] (5.5+6,-4-6) ellipse (1cm and 0.3cm);
\fill[black] (5+6,-4-6) circle (0.15);
\fill[black] (6+6,-4-6) circle (0.15);

\fill[lightgray] (6.5+6,-5-6) ellipse (1cm and 0.3cm);
\fill[black] (6+6,-5-6) circle (0.15);
\fill[black] (7+6,-5-6) circle (0.15);

\draw (6,0) circle (0.15);
\draw (7,-1) circle (0.15);
\draw (8,-2) circle (0.15);
\draw (9,-3) circle (0.15);
\draw (10,-4) circle (0.15);
\draw (11,-5) circle (0.15);

\draw (12,-6) circle (0.15);
\draw (13,-7) circle (0.15);
\draw (14,-8) circle (0.15);
\draw (15,-9) circle (0.15);
\draw (16,-10) circle (0.15);
\draw (17,-11) circle (0.15);

\draw (-2,-1) node{$B_{k-1}$};
\draw (-2,-4) node{$B_{k}$};
\draw (-2,-7) node{$B_{k+1}$};
\draw (-2,-10) node{$B_{k+2}$};

\draw (0,-3) node{$a_k$};
\draw (0,-4) node{$b_k$};
\draw (0,-5) node{$c_k$};

\end{tikzpicture}
\label{figurname}
\caption{The strategy profiles $s$ (grey) and $s^*$ (white). Columns correspond to resources.}
\end{figure}
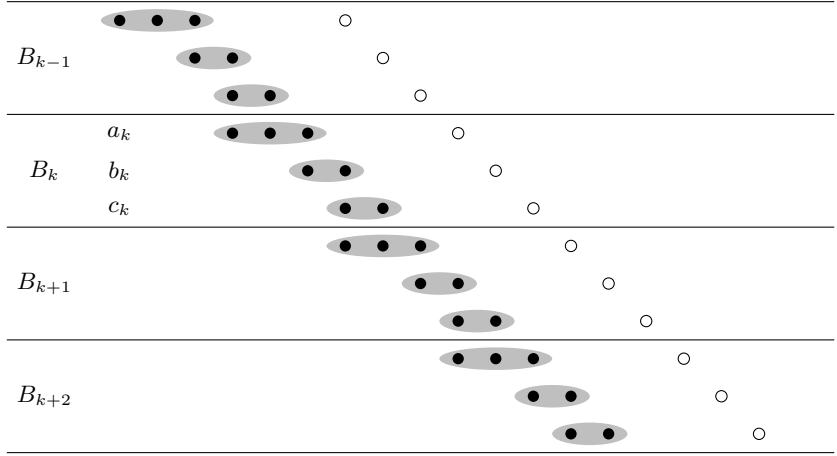
The strategies $s_i$ of players in the final blocks $B_{n+1}$ and $B_{n+2}$ are defined as above. However, we need to change the definition of $s_i^*$ because otherwise, $s$ is not a Nash equilibrium. So for each $i \in B_{n+1} \cup B_{n+2}$, we insert sets of new, previously unused resources $s_i^*$ such that $C_i(s_i) = |s_i^*|$.

For the following tuples of players $(i,j)$ it holds that $\alpha_{ij} = 1$: $(a_k, b_{k+1}), (a_k, c_{k+1}), (a_k, a_{k+2})$ as well as $(b_k, c_{k+1}), (b_k, a_{k+2})$ and $(c_k, a_{k+2}), (c_k, b_{k+2})$, where $0 \leq k \leq n$. All other $\alpha_{ij}$ are zero. Hence $\alpha_{ij} = 1$ iff $s_i^*$ intersects $s_j$. Note that if $s_i \cap s_j \neq \emptyset$, then $\alpha_{ij} = 0.$

Now, we claim that $s$ is a Nash equilibrium. In fact, for all $0 \leq k \leq n$ and $i = a_k$, $C^\alpha(s) = C(s) + \sum_{j \neq i} \alpha_{ij} C_j(s) = 7 +  5+5+7 = 24$, which equals $C_i^\alpha(s^*, s\mi) = 4 + 6 + 6 + 8$. A similar calculation shows $C_i(s) = C_i(s_i, s\mi)$ for $i = b_k, c_k$. Observe that for $k = n+1, n+2$, and $i \in B_k$, $C_i^\alpha(s) = C(s) = |s^*_i| = C(s_i^*, s\mi)$ by our construction of $s_i^*$. Hence $s$ is indeed a Nash equilibrium. It is easy to see that $s^*$ is optimal.

For $k = 1, \ldots, n$, block $B_k$ has the same cost: $C(B_k) := \sum_{i \in B_k} C_i(s) = 17$ and $C^*(B_k) := \sum_{i \in B_k} C_i(s^*) = 3$. Let $X = C(B_0) + C(B_{n+1}) + C(B_{n+2})$ and $X^*  = C^*(B_0) + C^*(B_{n+1}) + C^*(B_{n+2})$ and observe that these are constants independent of $n$. It follows that
\[
\frac{C(s)}{C(s^*)} = \frac{17n + X}{3n + X^*} = \frac{17 + o(n)}{3 + o(n)}. \eqno{\qed}
\]
\end{proof}

\section{Auctions}
An auction $G$ consists of an \emph{allocation rule} $a: \Sigma \to N$ which determines which bidder gets the item and a \emph{pricing rule} $p: \Sigma \to \mathbb R^N$ indicating how much each player should pay. Each bidder $i$ is assumed to have a certain valuation $v_i \in \mathbb R_+$ for the item. For a given bidding profile $b \in \mathbb R_+^N$, the social welfare is 
$\Pi(b) = v_{a(b)}$. Player $i$'s utility is given by $\Pi_i(b) = v_i - p_{i}(b)$ if he gets the object and $-p_i(b)$ otherwise. In a \emph{second-price auction}, the highest bidder gets the item and pays the second highest bid, while everybody else pays nothing.

We do not allow \emph{overbidding}, i.e., for all bidders $i$, $b_i \leq v_i$. This is a standard assumption because overbidding is a dominated strategy.  We denote by $\beta(b,i)$ the name of the player who places the $i$-th highest bid in $b$. We write $\beta(i)$ instead of $\beta(b,i)$ if the bidding profile is clear from the context. $\emptyset_i = 0$ denotes the strategy where bidder $i$ bids nothing.

\begin{theorem}
Consider an auction $G$ with an allocation rule as in the second-price auction and a pricing rule $p$ where every bidder pays at most what he would pay in a second-price auction (for every given bidding profile). 
Then the robust PoA of all altruism extensions $G^\alpha$ is at most $2$.
\end{theorem}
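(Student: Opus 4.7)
The plan is to apply the SC-boundedness machinery of Section~3. I would first verify that $G$ is social contribution bounded (in the payoff-maximization sense, i.e., $\Pi_i(b) \geq \Pi(b) - \Pi(\emptyset_i, b\mi)$ for all $i$ and $b$), then establish $(1,1)$-smoothness for the corresponding SCG $\bar G$, and finally invoke the payoff-maximization analog of Theorem~\ref{corresp_altruism} to transfer the bound to every altruistic extension $G^\alpha$. For payoff maximization, $(1,1)$-smoothness yields robust PoA at most $(1+\mu)/\lambda = 2$.

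For SC-boundedness, the check splits according to whether $i$ wins in $b$. If $i$ loses under $b$, then $\Pi_i(b) = 0$ and removing $i$ does not alter the winner, so both sides vanish. If $i$ wins, then $\Pi(b) = v_i$, $\Pi(\emptyset_i, b\mi) = v_{\beta(b,2)}$, and $\Pi_i(b) = v_i - p_i(b)$; the hypothesis on the pricing rule gives $p_i(b) \leq b_{\beta(b,2)}$, and no overbidding gives $b_{\beta(b,2)} \leq v_{\beta(b,2)}$, so $\Pi_i(b) \geq v_i - v_{\beta(b,2)}$ as required.

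For $(1,1)$-smoothness of $\bar G$, namely $\sum_i \bar \Pi_i(s^*_i, b\mi) \geq \Pi(s^*) - \Pi(b)$ for every $b$, I would take $i^*$ to be a bidder of maximum valuation and set $s^*_{i^*} = v_{i^*}$ and $s^*_j = \emptyset_j$ for $j \neq i^*$; this evidently attains the optimal welfare $v_{i^*}$. For $j \neq i^*$ the deviation $\emptyset_j$ forces $\bar \Pi_j(s^*_j, b\mi) = 0$. For $i^*$, no overbidding guarantees $v_{i^*} \geq v_j \geq b_j$ for every other $j$, so $i^*$ wins $(v_{i^*}, b\mi)$ and $\bar \Pi_{i^*}(v_{i^*}, b\mi) = v_{i^*} - v_k$, where $k = \arg\max_{j \neq i^*} b_j$. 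If $a(b) = i^*$ then $\Pi(s^*) - \Pi(b) = 0$ and $v_{i^*} \geq v_k$ suffices; otherwise $k = a(b)$ and the bound holds with equality.

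The main delicacy is treating the non-participation strategy $\emptyset_i$, which the paper identifies with bidding zero, in a way consistent with the tie-breaking rule of the allocation: one needs that a zero-bidder never defeats an active bidder, so that $\Pi(\emptyset_i, b\mi)$ genuinely equals the welfare of the auction among the remaining bidders. Granting this mild convention, the payoff-maximization analog of Theorem~\ref{corresp_altruism}---proved by reversing all inequalities and using that SC-boundedness implies $\Pi_{-i}(s) \leq \Pi(\emptyset_i, s\mi)$, which absorbs the altruistic cross-term---lifts the $(1,1)$-smoothness of $\bar G$ to every altruistic extension $G^\alpha$, completing the proof.
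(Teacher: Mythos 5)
Your proposal is correct and follows essentially the same route as the paper: verify SC-boundedness via the pricing hypothesis and no-overbidding, prove smoothness of the corresponding SCG with the profile where the highest-valuation bidder bids his value and the others bid zero, and transfer the bound to every altruistic extension through the payoff-maximization analog of Theorem~\ref{corresp_altruism}. The only differences are cosmetic (your $(1,1)$ convention versus the paper's $(1,-1)$, and your explicit remarks on tie-breaking and the reversed-inequality reduction, which the paper leaves implicit).
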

\begin{proof}
Let $G$ be an auction of the described type. We show that $G$ is SC-bounded. Let $b$ be a bidding profile. We calculate
\[
\bar \Pi_i(b) = \Pi(b) - \Pi(0, b\mi) = 
\begin{cases}
v_{\beta(1)}-v_{\beta(2)}, & i = \beta(1) \\
0, & \text{otherwise}.
\end{cases}
\]
Note that $\Pi_{\beta(1)} \geq v_{\beta(1)} - b_{\beta(2)} \geq v_{\beta(1)} - v_{\beta(2)}$ because we do not allow overbidding. Hence $\Pi_i(b) \geq \bar \Pi_i(b)$ for all $i$. So $G$ is SC-bounded.

Now, let $b^*$ be the optimal bidding profile where the bidder with the highest valuation, say bidder 1, bids his valuation and everybody else bids nothing. Let $b$ be arbitrary. Then
$
\sum_i \bar \Pi_i(b^*, b\mi) = \sum_i (\Pi(b_i^*, b\mi) -  \Pi(0, b\mi) )= \Pi(v_1, b_{-1}) -  \Pi(0, b_{-1}).
$
Now, we distinguish two cases: Either bidder 1 wins under $b$ and then $b$ is optimal and $\Pi(0, b_{-1}) \leq \Pi(b)$. Otherwise, the winner remains the same if 1 does not bid anything, so $\Pi(0, b_{-1}) = \Pi(b)$. In any case, $\Pi(0, b_{-1}) \leq \Pi(b)$. Furthermore, $\Pi(v_1, b_{-1}) = v_1 = \Pi(b^*)$ because no bidder can overbid $v_1$. It follows that the term above is bounded from below by $\Pi(b^*) - \Pi(b)$. 
Thus the robust PoA is at most $2$.
\qed
\end{proof}

\begin{theorem}\label{thm:auction_friendship}
Let $G$ be a second-price auction. Then the coarse PoA of the class of friendship extensions of $G$ is exactly $2$.
\end{theorem}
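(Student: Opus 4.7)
I will prove the two bounds separately. The upper bound $\le 2$ follows by applying the coarse equilibrium condition to a single truthful deviation by the highest-valuation bidder; the lower bound is witnessed by a minimal two-bidder instance. A direct argument is needed because second-price auctions fail condition~(3) of strong SC-boundedness (when the current winner drops out, the replacement winner's gain in utility exceeds the gain in social welfare), so Theorem~\ref{corresp_relationship} is not directly applicable.

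\medskip
\noindent\emph{Upper bound.} Fix any coarse equilibrium $\sigma$ of $G^\alpha$ and, without loss of generality, order the bidders so that bidder $1$ has maximum valuation; then $\Pi(b^*) = v_1$. I consider the deviation $b_1' = v_1$. The no-overbidding assumption ($b_j \le v_j \le v_1$ for every $j$) forces bidder $1$ to win under $(v_1, b_{-1})$, so every other bidder has payoff zero and hence
\[
\Pi_1^\alpha(v_1, b_{-1}) \;=\; \alpha_{11}\bigl(v_1 - \max_{j \neq 1} b_j\bigr) \;=\; v_1 - \max_{j\neq 1} b_j .
\]
The coarse equilibrium condition then yields $\E_\sigma[\Pi_1^\alpha(b)] \ge v_1 - \E_\sigma[\max_{j\neq 1} b_j]$. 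I bound the left-hand side by $\E_\sigma[\Pi(b)]$ via $\Pi_1^\alpha(b) = \sum_j \alpha_{1j}\Pi_j(b) \le \sum_j \Pi_j(b) \le \Pi(b)$, where I use $\alpha_{1j} \le 1$, non-negativity of individual payoffs (a consequence of no-overbidding), and sum-boundedness of the social welfare. A short case analysis shows pointwise that $\max_{j\neq 1} b_j \le \Pi(b)$: if $a(b) \neq 1$ the maximum equals $b_{a(b)} \le v_{a(b)} = \Pi(b)$, while if $a(b) = 1$ every $b_j$ with $j\neq 1$ satisfies $b_j \le v_j \le v_1 = \Pi(b)$. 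Chaining the three inequalities gives $v_1 \le 2\,\E_\sigma[\Pi(b)]$, establishing the coarse PoA is at most $2$.

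\medskip
\noindent\emph{Lower bound.} I would exhibit two bidders with $v_1 = 1$, $v_2 = \tfrac12$ and friendship weights $\alpha_{11} = \alpha_{22} = \alpha_{12} = 1$, $\alpha_{21} = 0$. At the profile $b = (0,\tfrac12)$ bidder $2$ wins with his maximal possible payoff $\tfrac12$ and has no improving deviation; bidder $1$'s perceived payoff equals $\Pi_1 + \Pi_2 = \tfrac12$, which is matched by any deviation $b_1' > \tfrac12$ (he then wins but pays $\tfrac12$, so $\Pi_2$ drops to $0$) and strictly worsened by any $0 < b_1' < \tfrac12$ (bidder $2$ still wins but now pays $b_1'$, shrinking $\Pi_2$). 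Hence $b$ is a pure, and therefore coarse, Nash equilibrium with welfare $\tfrac12$ against an optimum of $1$. The only delicate step in the whole argument is the pointwise comparison $\max_{j\neq 1} b_j \le \Pi(b)$, which is the one place where bidder $1$'s maximality and no-overbidding are both essential; everything else is a short chaining of the coarse equilibrium condition with elementary bounds.
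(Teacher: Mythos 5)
Your argument is correct, and at its core it is the same argument as the paper's: the only deviation that matters is the highest-valuation bidder bidding his value, his loss from friendship is at most the current winner's payoff, and the price he would pay is at most the current welfare (both via no-overbidding), giving $\Pi(b^*) \le 2\,\E[\Pi(b)]$. The packaging differs, though. The paper routes the inequality through the corresponding SCG $\bar G$: it shows $\Pi(b) \ge \sum_i \bar\Pi_i(b_i^*,b_{-i})$ by a case distinction on whether bidder $1$ wins under $b$, then invokes the $(1,-1)$-smoothness of $\bar G$ established in the preceding theorem, and only sketches the extension from pure Nash to coarse equilibria (``a canonical calculation''). You bypass $\bar G$ entirely, work with the coarse-equilibrium condition for bidder $1$ in expectation from the start, and absorb both cases into the single pointwise bounds $\Pi_1^\alpha(b) \le \Pi(b)$ and $\max_{j\ne 1} b_j \le \Pi(b)$; this is a bit more self-contained and makes the coarse claim explicit, at the cost of not illustrating the paper's SCG reduction machinery. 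Your lower-bound instance (values $1$ and $\tfrac12$, one-sided friendship $\alpha_{12}=1$, equilibrium $(0,\tfrac12)$) differs from the paper's (values $1$ and $2$, symmetric friendship, equilibrium $(1,0)$) but rests on the same phenomenon -- the high-value bidder is deterred from winning because doing so destroys his friend's surplus -- and is equally valid. One cosmetic point: your parenthetical claim that no-overbidding ``forces bidder $1$ to win'' under $(v_1,b_{-1})$ can fail under an adverse tie-breaking rule when some $j\ne 1$ has $b_j=v_j=v_1$; however, in that case both $\Pi_1^\alpha(v_1,b_{-1})$ and $v_1-\max_{j\ne1}b_j$ equal $0$, so the identity you use, and hence the whole chain, still holds.
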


Note that here the friendship model is \emph{not} a generalization of the altruism model because $\Pi \neq \sum_i \Pi_i$. We defer the proof to Appendix \ref{app:auctions}.

\section{Valid Utility Games}\label{section_valid_utility}

\sloppy

A \emph{valid utility game} \cite{vetta} is defined as a payoff-maximization game $G = (N,E,\{\Sigma_i\}_{i \in N}, \{\Pi\}_{i \in N}, V)$, where $E$ is a ground set of resources, $\Sigma_i \subseteq \mathcal P(E)$ and $V$ is a submodular and non-negative function on $E$. The social welfare $\Pi$ is given by $\Pi(s) = V(\bigcup_{i \in N} s_i)$ and is assumed to be sum-bounded. Furthermore, we require $G$ to satisfy $\Pi_i(s) \geq \Pi(s) - \Pi(\emptyset, s\mi)$
for all $s \in \Sigma$. If $G$ additionally satisfies the last inequation with equality, it is called \emph{basic utility game} \cite{vetta}. For all players $i$, set $\emptyset_i = \emptyset$.

\begin{theorem}[\cite{roughgarden}]
The robust PoA of valid utility games with non-decreasing\footnote{where non-decreasing means that for all $A\su B \su E$ it holds that $V(A) \leq V(B)$.} set function $V$  is bounded by $2$.
\end{theorem}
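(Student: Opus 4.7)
The plan is to prove the payoff-maximization smoothness inequality
\[
\sum_{i \in N} \Pi_i(s_i^*, s\mi) \;\geq\; \Pi(s^*) - \Pi(s)
\]
for the optimum $s^*$ and every profile $s$. Combined with the coarse-equilibrium deviation condition $\E_\sigma[\Pi_i(s)] \geq \E_{\sigma\mi}[\Pi_i(s_i^*, s\mi)]$ and sum-boundedness $\Pi(s) \geq \sum_i \Pi_i(s)$, this yields $2\,\E_\sigma[\Pi(s)] \geq \Pi(s^*)$, and hence the coarse (and thus correlated, mixed, pure) PoA is at most $2$, giving a robust PoA bound of $2$.

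To establish the inequality itself, I would follow the standard Vetta-style submodular argument in three steps. First, apply monotonicity of $V$ to pass from $\Pi(s^*) = V(\bigcup_i s_i^*)$ to the larger quantity $V(\bigcup_i s_i^* \cup \bigcup_i s_i)$, so that it suffices to bound $V(\bigcup_i s_i^* \cup \bigcup_i s_i) - V(\bigcup_i s_i)$ from above. Second, fix any ordering of the players and expand this difference as a telescoping sum of marginal contributions, where the $k$-th term measures the value added by $s_k^*$ to $s_1^* \cup \cdots \cup s_{k-1}^* \cup \bigcup_i s_i$. Submodularity, applied twice, lets one replace each such marginal by the larger marginal of $s_k^*$ with respect to the strictly smaller reference set $\bigcup_{i\neq k} s_i$, giving
\[
V\Big(\bigcup_i s_i^* \cup \bigcup_i s_i\Big) - V\Big(\bigcup_i s_i\Big) \;\leq\; \sum_{k=1}^n \big[\Pi(s_k^*, s_{-k}) - \Pi(\emptyset, s_{-k})\big].
\]
Third, the defining valid-utility property $\Pi_k(s_k^*, s_{-k}) \geq \Pi(s_k^*, s_{-k}) - \Pi(\emptyset, s_{-k})$ bounds each summand termwise, concluding the proof of smoothness.

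The argument is essentially routine; the main care needed is in directing the submodularity inequality correctly (marginals grow as the reference set shrinks), so that monotonicity is used once to enlarge $\bigcup_i s_i^*$ by $\bigcup_i s_i$ and submodularity is used twice to shrink the reference set of $s_k^*$ down to $\bigcup_{i \neq k} s_i$. It is worth noting that both monotonicity and submodularity of $V$ are genuinely needed here (non-negativity is implicit in the definition of a valid utility game).
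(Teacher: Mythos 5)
Your proposal is correct, and it is essentially the standard argument behind this result: the paper itself does not reprove the theorem but imports it from Roughgarden/Vetta, whose proof is exactly your route (monotonicity to absorb $\bigcup_i s_i$, a telescoping sum of marginals of the $s_k^*$'s, submodularity to shrink each reference set to $\bigcup_{i\neq k} s_i$, and the defining inequality $\Pi_k(s_k^*,s_{-k}) \geq \Pi(s_k^*,s_{-k}) - \Pi(\emptyset,s_{-k})$), yielding the smoothness inequality $\sum_i \Pi_i(s_i^*,s_{-i}) \geq \Pi(s^*) - \Pi(s)$ and hence a robust PoA of at most $2$. Your closing bookkeeping via the coarse-equilibrium condition and sum-boundedness is also the standard extension-theorem step, so there is nothing to add.
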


An example for valid utility games with non-decreasing set functions are competitive facility location games without fixed costs \cite{vetta}. 

The following theorem has already been proven in \cite{altruistic_games} and tightness of this bound has been shown in \cite{blum} for the base game. We now use our framework to provide a shorter proof that illustrates nicely why the robust PoA does not increase for altruistic extensions: The corresponding SCG falls into the same category of games.
\begin{theorem}
Let $G$ be a valid utility game with non-decreasing $V$.  Then the robust price of anarchy of every altruistic extension $G^\alpha$ of $G$ is bounded by $2$. 
\end{theorem}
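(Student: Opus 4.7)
The plan is to apply the social contribution game framework (in its payoff-maximization form) to reduce the altruistic extension of $G$ to its corresponding SCG $\bar G$, and then observe that $\bar G$ falls into the class of valid utility games with non-decreasing $V$, for which the Vetta/Roughgarden bound of $2$ applies directly.

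First I would observe that the defining inequality of a valid utility game, namely $\Pi_i(s) \geq \Pi(s) - \Pi(\emptyset, s\mi)$ for every $s$, is exactly the (reversed) social contribution boundedness condition in the payoff-maximization setting, with the default strategy $\emptyset_i = \emptyset$. Thus $G$ is automatically SC-bounded, and its corresponding SCG $\bar G$ has utilities $\bar \Pi_i(s) = \Pi(s) - \Pi(\emptyset, s\mi)$. In fact $\bar G$ is a \emph{basic} utility game with the same ground set $E$ and the same submodular non-negative function $V$ as $G$, since the defining inequality now holds with equality and the social welfare $\bar \Pi = \Pi$ is unchanged.

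Next, because $V$ is non-decreasing, the classical result (cited above, due to Roughgarden) gives that the robust PoA of $\bar G$ is at most $2$. So if I can pass smoothness from $\bar G$ to $G^\alpha$, I am done. This is precisely the payoff-maximization analog of Theorem~\ref{corresp_altruism}: the same chain of inequalities that was used in the cost-minimization proof (splitting $\alpha_i$ into the selfish and altruistic parts, then applying SC-boundedness twice) carries over with every inequality reversed, yielding $\Pi_i(s_i^*, s\mi) + \alpha_i(\Pi\mi(s_i^*, s\mi) - \Pi\mi(s)) \geq \bar \Pi_i(s_i^*, s\mi)$ for each $i$ and each $\alpha_i \in [0,1]$. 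Summing over $i$ shows that any $(\lambda,\mu)$-smoothness bound for $\bar G$ transfers to $G^\alpha$.

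The only nontrivial step is checking that $\bar G$ really is a valid utility game in the sense required by the $2$-bound — but this is immediate: $V$ is inherited from $G$, the sum-boundedness of $\bar \Pi$ follows from submodularity of $V$ (standard in the Vetta argument), and the Vetta inequality holds with equality by construction. Putting these pieces together yields $\text{rPoA}(G^\alpha) \leq \text{rPoA}(\bar G) \leq 2$, as claimed. I do not expect any genuine obstacle here; the content of the argument is really the conceptual observation that the SCG associated with a valid utility game is again a valid utility game of the same type, which is what makes altruism harmless in this class.
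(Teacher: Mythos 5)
Your proposal is correct and follows essentially the same route as the paper: observe that the valid utility game inequality is exactly SC-boundedness with $\emptyset_i = \emptyset$, check that the corresponding SCG $\bar G$ is again a valid (indeed basic) utility game with the same non-decreasing $V$, invoke Roughgarden's bound of $2$ for $\bar G$, and transfer smoothness to $G^\alpha$ via the payoff-maximization analogue of Theorem~\ref{corresp_altruism}. The only cosmetic difference is that you justify sum-boundedness of $\bar \Pi$ via submodularity of $V$, whereas the paper gets it more directly from $\bar \Pi_i \leq \Pi_i$ together with the assumed sum-boundedness of $\Pi$ in $G$; both arguments are valid.
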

\begin{proof}
It follows directly from the definition that $G$ is SC-bounded. It is easy to verify that the corresponding SCG $\bar G = (N,E,\{\Sigma_i\}_{i \in N}, \{\bar \Pi\}_{i \in N}, V)$ is again a valid utility game: $\sum_i \bar \Pi_i (s) \leq \sum_i \Pi_i(s) \leq \Pi(s)$ and $\bar \Pi_i(s) = \Pi(s) - \Pi(\emptyset, s\mi)$. So the robust PoA of $\bar G$ is at most $2$. Our claim follows by Theorem \ref{corresp_altruism}.
\qed
\end{proof}

\bibliographystyle{splncs03}
\bibliography{Paper} 

\newpage
\appendix

\section{Proof of Theorem \ref{weighted_smoothness_thm}} \label{weighted_smoothness_appendix}
\textbf{Theorem \ref{weighted_smoothness_thm}.}~
\emph{Let $G^\alpha$ be a friendship extension of a cost-minimization game with weight-bounded social cost function $C$. If $G^\alpha$ is $(\lambda, \mu)$-smooth with $\mu < 1$, then the coarse PoA of $G^\alpha$ is at most $\frac{\lambda}{1-\mu}$.}
\begin{proof}
The proof works similarly to \cite[Theorem 1]{inefficiency}:

Let $\sigma$ be a coarse equilibrium for $G^\alpha$ and let $s$ be a random variable with distribution $\sigma$. In addition, let $s^*$ be an arbitrary strategy profile and let $\bar s$ be as in the definition of smoothness. We assume without loss of generality that $\bar s$ is a pure strategy profile; the arguments also work in the mixed case. Because $\sigma$ is a coarse equilibrium, for all players $i$ we have
\[
\mathbf E \Big [C_i(s) + \sum_{j \neq i} \alpha_{ij} C_j(s) \Big ] \leq \mathbf E \Big [C_i(\bar s_i, s\mi) + \sum_{j \neq i} \alpha_{ij} C_j(\bar s_i, s\mi) \Big]
\]
Using linearity of expectation, it follows that
\begin{eqnarray*}
\mathbf E [C(s)] 
&\leq& \sum_i w_i \mathbf E[C_i(s)] \\
&\leq & \sum_i w_i \mathbf E \Big[C_i(\bar s_i, s\mi) + \sum_{j \neq i} \alpha_{ij} (C_j(\bar s_i, s\mi)-C_j(s)) \Big] \\
&=& \mathbf E \Big[\sum_i w_i \Big( C_i(\bar s_i, s\mi) + \sum_{j \neq i} \alpha_{ij} (C_j(\bar s_i, s\mi)-C_j(s)) \Big) \Big] \\
&\leq & \mathbf E [\lambda C(s^*) + \mu C(s)] \\ 
&=&  \lambda C(s^*) + \mu \mathbf E [C(s)]
\end{eqnarray*}
Hence $\mathbf E[C(s)] \leq \frac{\lambda}{1-\mu} C(s^*).$
\qed
\end{proof}

\section{Necessity of Weight Condition}
\label{app:weight-example}

Let us assume we have $m$ machines and $m$ jobs of weight $1$ as well as $m(m-1)$ jobs of weight 0. Let $A_i$ ($i = 1,2$) denote the set of jobs of weight $i$. Set $\alpha_{jk} = 1$ if $j \in A_1, k \in A_0$, $0$ otherwise.

First, consider the schedule $x$ where every job in $A_1$ gets scheduled on machine $1$ and all the jobs from $A_0$ are distributed among the remaining $m-1$ machines such that every machine $i \in \{2,\ldots, m\}$ gets exactly $m$ jobs. We can assume that the tie-breaking rule among jobs in $A_0$ is such that they cannot improve their completion time by deviating. Then $x$ is a Nash equilibrium: Indeed, let $j \in A_1$. Then $C_j(x) \leq m$ and for all $i \in \{2, \ldots, m\}$, $C_j(i, x\mj) = 1 + \sum_{k \in A_0: \: x_k = i} 1 = 1+m$. Hence $j$ has no incentive to deviate. Note that $C(x) = \sum_{j = 1}^m j = \frac{1}{2}m(m+1).$

Now, in an optimal schedule $x^*$, the jobs are distributed among the machines in such a way that every machine completes exactly one job of weight 1. Hence an optimal schedule satisfies $C(x^*) = m$.

It follows that the pure PoA is at least $\frac{1}{2} (m+1)$ and thus unbounded.

\section{Robust PoA of $P||\sum_j C_j$}
\label{app:MFT}

In order to characterize the optimal solution, we use the \emph{Minimum Mean Flow Time (MFT)} algorithm \cite{horowitz} that produces an optimal schedule for $Q|| \sum_j C_j$. A formal description of the algorithm is given below \cite{hoeksmauetz}.

\begin{algorithm}[H]
\caption{The MFT Algorithm}
For each machine $i$ set $h_i = 0$
\begin{algorithmic}
\WHILE{not all jobs are placed}
\STATE Take the longest job $j$ of the set of unscheduled jobs
\STATE Assign $j$ to the machine $i$ with the smallest value of $(h_i +1) /s_i$
\STATE For the chosen machine update $h_i := h_i +1$
\ENDWHILE
\STATE Sort the jobs on each machine in SPT order
\end{algorithmic}
\end{algorithm}

\sloppy
\begin{lemma}\label{optimalcost}
Let $x^*$ be an optimal schedule for $P || \sum_j C_j$. Then 
$C(x^*) = \sum_j p_j  ( 1+  \lfloor (n-j)/m \rfloor )$.
\end{lemma}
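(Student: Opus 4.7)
The plan is to read off the optimal schedule directly from the MFT algorithm, which is stated above to produce an optimum for $Q||\sum_j C_j$ and hence also for $P||\sum_j C_j$. Because the social cost $C = \sum_j C_j$ can be rewritten as $\sum_j p_j \cdot (1 + |\{k > j : x^*_k = x^*_j\}|)$ (each $p_j$ appears in $C_j$ and in $C_k$ for every strictly larger job $k$ scheduled on the same machine, once the jobs are sorted in SPT order on each machine), the lemma reduces to counting, for each job $j$, how many strictly larger jobs end up on the same machine as $j$ in the MFT schedule.

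To do this count, I would trace MFT in the unit-speed case. Since $s_i = 1$ for all $i$, the rule ``assign to the machine with smallest $(h_i+1)/s_i$'' reduces to ``assign to the machine with smallest $h_i$'', with ties broken arbitrarily. So MFT processes jobs in decreasing order of $p_j$ (i.e., $j = n, n-1, \dots, 1$) and places each on a currently least-loaded machine, which is nothing but a round-robin distribution across the $m$ identical machines.

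The key observation is then: at the moment job $j$ is placed, exactly $n-j$ jobs have already been assigned, each of them strictly longer than $j$; by the round-robin property these have been spread as evenly as possible, so the machine chosen for $j$ carries exactly $\lfloor (n-j)/m\rfloor$ jobs, all with index $> j$. Combined with the rewriting of $C$ above, this yields
\[
C(x^*) \;=\; \sum_j p_j\bigl(1+|\{k > j : x^*_k = x^*_j\}|\bigr) \;=\; \sum_j p_j\bigl(1+\lfloor (n-j)/m\rfloor\bigr),
\]
as claimed.

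The only real obstacle is making the round-robin counting rigorous (in particular, being careful that the ``longer-job'' count on $j$'s machine at the moment of placement is exactly $\lfloor (n-j)/m\rfloor$ for every tie-breaking rule MFT may use). A short induction on the number of placed jobs, showing that after $t$ assignments the multiset of loads is $\{\lceil t/m\rceil,\dots,\lceil t/m\rceil,\lfloor t/m\rfloor,\dots,\lfloor t/m\rfloor\}$, handles this cleanly; the final SPT-sort step inside each machine does not change which jobs share a machine and hence does not affect the count.
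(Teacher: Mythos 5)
Your proposal is correct and follows essentially the same route as the paper: assume without loss of generality that $x^*$ is produced by the MFT algorithm, decompose the cost as $\sum_j p_j(1+h^{x^*}_{x^*_j}(j))$, and use the round-robin/least-loaded behaviour of MFT on identical machines to conclude that $j$ shares its machine with exactly $\lfloor (n-j)/m\rfloor$ larger-index jobs. Your balanced-loads induction is just a slightly more explicit justification of the step the paper handles by appealing to a fixed tie-breaking rule, and your implicit assumption of distinct processing times is covered by the paper's standing simplification.
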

\begin{proof}
Without loss of generality, we can assume that $x^*$ is generated by the MFT algorithm.
Consider some job $j$ and let $i = x^*_j$. For each job $j'$ that is considered after $j$ (i.e., each job with smaller index), the algorithm chooses a machine $i'$ that minimizes $h_{i'}^{x^*}(j')$. So it chooses exactly $m-1$ other machines before it places another job on $i$ (provided that the algorithm always uses the same tie-breaking rule on the set of machines). Hence $j$ causes a delay of $p_j$ for himself and for $\left\lfloor (n-j)/m \right\rfloor$ other machines. Summing over all jobs $j$, the formula follows.
\qed
\end{proof}

\begin{lemma}\label{linear_weights}
Let $p_1 \leq \ldots \leq p_m$ be a sequence of reals. Then
$$
\left ( \dfrac{1}{2}- \dfrac{1}{2m} \right ) \sum_{j = 1}^m p_j \geq \sum_{j = 1}^m \dfrac{m-j}{m} p_j.
$$
\end{lemma}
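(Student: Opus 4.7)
The plan is to recognize the inequality as an instance of Chebyshev's sum inequality for oppositely ordered sequences: $(p_j)_{j=1}^m$ is weakly increasing by assumption, while $w_j := \frac{m-j}{m}$ is weakly decreasing in $j$. The key preparatory observation is that the constant $\frac{1}{2} - \frac{1}{2m}$ appearing on the left-hand side is exactly the arithmetic mean of the weights on the right-hand side, because
\[
\bar w \;:=\; \frac{1}{m}\sum_{j=1}^m \frac{m-j}{m} \;=\; \frac{1}{m^2}\cdot\frac{m(m-1)}{2} \;=\; \frac{m-1}{2m} \;=\; \frac{1}{2}-\frac{1}{2m}.
\]
With this rewriting, the claim becomes the statement $\bar w \sum_j p_j \geq \sum_j w_j p_j$, or equivalently $\sum_j (\bar w - w_j)\,p_j \geq 0$.

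To close the argument I would use the symmetrization identity
\[
\sum_{j=1}^m (\bar w - w_j)\,p_j \;=\; \frac{1}{m}\sum_{1\le j<k\le m} (w_j - w_k)(p_k - p_j),
\]
obtained by substituting $\bar w = \frac{1}{m}\sum_k w_k$, splitting the resulting double sum over $\{(j,k)\colon j\ne k\}$ into the pairs $(j,k)$ and $(k,j)$ with $j<k$, and collecting terms. For every index pair $j<k$, the factor $w_j - w_k$ is nonnegative (since $w$ is decreasing in $j$) and the factor $p_k - p_j$ is nonnegative (since $p$ is increasing), so the right-hand side is a sum of nonnegative terms and the lemma follows.

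I do not anticipate any real obstacle. The only subtlety worth being careful about is that the statement allows the $p_j$ to be arbitrary reals rather than nonnegative processing times; any naive argument that tries to bound $\sum w_j p_j$ by factoring out a maximum weight, or to split the sum according to the sign of $\bar w - w_j$ using monotonicity of $p$, would implicitly require $p_j \geq 0$. The identity above is manifestly sign-independent, which is why I would prefer it over such a direct pairing argument; equivalently, one can just invoke Chebyshev's sum inequality for oppositely ordered real sequences and obtain the same conclusion in one line.
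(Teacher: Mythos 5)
Your proof is correct, and it takes a genuinely different route from the paper. You rewrite the constant $\tfrac12-\tfrac1{2m}$ as the mean $\bar w$ of the weights $w_j=\tfrac{m-j}{m}$ and then invoke the pairwise symmetrization identity
\[
\sum_{j=1}^m (\bar w - w_j)\,p_j \;=\; \frac{1}{m}\sum_{1\le j<k\le m}(w_j-w_k)(p_k-p_j)\;\ge\;0,
\]
i.e.\ Chebyshev's sum inequality for oppositely ordered sequences. The paper instead works with the increments $\delta_j=p_j-p_{j-1}$ (setting $p_0=0$), expands $m\sum_j p_j=m\sum_j\sum_{k\le j}\delta_j$, and after a coefficient comparison and a reindexing of the triangular sums arrives at $(m-1)\sum_j p_j\ge 2\sum_j(m-j)p_j$, which is the same statement after dividing by $2m$. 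Your argument is shorter, identifies the inequality as a named classical fact, and is transparently valid for arbitrary real $p_j$, which you rightly flag as the one subtlety; the paper's telescoping argument is more computational and at first glance seems to need $\delta_1=p_1\ge 0$, though it in fact survives because the $j=1$ coefficient comparison holds with equality. The paper's route has the minor advantage of being fully self-contained arithmetic with no appeal to (or rediscovery of) Chebyshev's inequality, but both proofs are complete and elementary.
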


\begin{proof}
Let $\delta_j = p_j - p_{j-1} \geq 0$ for $j \in \{1, \ldots, n\}$, where $p_{0}$ is set to $0$. It holds that
\begin{eqnarray}\label{rechnung1}
m \sum_j \sum_{k \leq j} \delta_j &=& \sum_j (m-j+1)m \, \delta_j \nonumber \\
&\geq& \sum_j (m-j+1)(1+(m-j)) \delta_j \nonumber \\
& = & \sum_j (m-j+1) \delta_j + \sum_j (m-j)(m-j+1) \delta_j \nonumber \\
& = & \sum_j \sum_{k \leq j} \delta_k + \sum_j (m-j)(m-j+1) \delta_j
\end{eqnarray}
The second sum equals
\begin{eqnarray}\label{rechnung2}
 2 \sum_k \frac{(m-k)(m-k+1)}{2} \delta_k & =&  2 \sum_k \left( \delta_k \sum_{j=1}^{m-k} j \right) \nonumber \\
 &=& 2 \sum_j (m-j) \sum_{k \leq j} \delta_k
\end{eqnarray}
Combining (\ref{rechnung1}) and (\ref{rechnung2}), we get
\begin{eqnarray*}
m \sum_j p_j &=& m \sum_j \sum_{k \leq j} \delta_j \nonumber \\
&\geq& \sum_j \sum_{k \leq j} \delta_k + 2 \sum_j (m-j) \sum_{k \leq j} \delta_k \nonumber \\
& = &  \sum_j p_j + 2 \sum_j (m-j) p_j.
\end{eqnarray*}
Thus $(m-1) \sum_j p_j \geq  2 \sum_j (m-j) p_j$. Multiplying by $\dfrac{1}{2m}$, the claim follows.
\qed
\end{proof}

Recall that by $\bar x$ we denote the mixed schedule that assigns each job to each machine with equal probability.\newline

\noindent
\textbf{Theorem~\ref{thm:rpoa-p}.} 
\emph{For any schedule $x$ and any optimal 
$x^*$, 
\[\sum_j C_j(\bar x_j, x\mj) \leq C(x^*) + \Big(\frac{1}{2}-\frac{1}{2m}\Big)\sum_j p_j.\]
In particular, the robust price of anarchy of $P || \sum_j C_j$ is at most $ \frac{3}{2} - \frac{1}{2m}$. This bound is tight.} 
\begin{proof}
First we bound the robust PoA from above.

Let $x$ be an arbitrary schedule and suppose $x^*$ is optimal. Then by Lemma \ref{mixedcost},
\begin{eqnarray*}
\sum_j \E[ C_j(\bar x_j, x_{-j})] &=& \sum_j p_j \left(1+\frac{n-j}{m}\right) \\
&=& \sum_j p_j \left(1+ \left\lfloor \frac{n-j}{m} \right\rfloor\right) \
+ \sum_j p_j \left( \frac{(n-j) \mod m}{m} \right)
\end{eqnarray*}
The left sum evaluates as $C(x^*)$ by Lemma \ref{optimalcost}. It follows from Lemma \ref{linear_weights} that the right sum is at most $\left( \frac{1}{2} - \frac{1}{2m} \right) \sum_j p_j$,
which in turn is at most $\left( \frac{1}{2} - \frac{1}{2m} \right) C(x^*)$. This shows the claim.

Now we give a lower bound on the robust PoA. Take $m = n$ and let all jobs have the same processing time, say $1$. Again, let $\bar x$ be the mixed schedule that assigns each job to each machine with probability $\frac{1}{m}$. The optimal schedule $x^*$ assigns exactly one job to each machine and thus has a cost of $m$. By Lemma \ref{mixedcost}, the cost of $\bar x$ evaluates as
\begin{eqnarray*}
C(\bar x) &= &  \sum_j \Big (1+\frac{n-j}{m} \Big) =  m + \frac{1}{m}\sum_j (m-j) = m +  \frac{m (m-1)}{2m} = \dfrac{3}{2}m - \frac{1}{2}.
\end{eqnarray*}
Hence the mixed price of anarchy (which is a lower bound for the robust PoA) is at least 
$\frac{3}{2}-\frac{1}{2m}$.
\qed
\end{proof}

\section{Proof of Theorem \ref{thm:auction_friendship}}\label{app:auctions}
\textbf{Theorem \ref{thm:auction_friendship}.}~
\emph{Let $G$ be a second-price auction. Then the coarse PoA of the class of friendship extensions of $G$ is exactly $2$.}

\begin{proof}
Unfortunately, $G$ is not strongly SC-bounded because assumption (3) is not satisfied. However, we can still bound the coarse PoA for $G^\alpha$ by using $\bar G$ in the following way.

Consider a Nash equilibrium $b$ and a valuation profile $v$ such that, say, bidder 1 has the highest value for the item. Let $b^*$ be as in the last proof, i.e., bidder 1 bids his value and everybody else bids nothing. If $\beta(b,1) = 1$, then $\Pi(b) = \Pi_1(b^*_1, b_{-1}) \geq \bar \Pi_1(b^*_1, b_{-1}) = \sum_i \bar \Pi_i(b_i^*, b\mi)$, so we can use the robust PoA of $\bar G$. Now, assume $\beta(b,1) \neq 1$. Then
\begin{eqnarray*}
0 = \Pi_1(b) &\geq & \Pi_1(b_1^*, b_{-1}) + \sum_{j \neq 1} \alpha_{1j} (\Pi_j(b_1^*, b_{-1}) - \Pi_j(b)) \\
&=& \Pi(b_1^*, b_{-1})- b_{\beta(b,1)} -  \alpha_{1 \beta(b,1)} \Pi_{\beta(b,1)}(b).
\end{eqnarray*}
We know that $b_{\beta(b,1)} \leq v_{\beta(b,1)} = \Pi(b) = \Pi(0, b_{-1})$. Also, $\alpha_{1 \beta(b,1)} \Pi_{\beta(b,1)}(b) \leq \Pi(b).$ Hence
\[
0 \geq \Pi(b_1^*, b_{-1})-(0, b_{-1}) - \Pi(b).
\]
Hence again $\Pi(b)$ is at least $\bar \Pi (b_1^*, b_{-1}) = \sum_i \bar \Pi_i(b^*_i, b\mi)$.  A canonical calculation shows that the same holds for coarse equilibria. Now, in the previous proof we saw that $\bar G$ is $(1, -1)$-smooth with respect to $b^*$. So the coarse PoA of $G$ is at most $2$.

It remains to show that this bound is tight. Consider the following situation: We have two bidders with $v_1 = 1$, $v_2 = 2$, $\alpha_{12} = \alpha_{21} = 1$. Clearly, it is optimal to allocate the item to bidder $2$ with a social welfare of $2$. However, the bidding profile $b = (1,0)$ is a Nash equilibrium. Indeed, bidder $2$ has a utility of $\Pi^\alpha(b) = 1$ which remains the same if he outbids player $1$.
\qed
\end{proof}
\end{document}